\newif\iflncs
\renewcommand\subsubsection{\@startsection{subsubsection}{3}{\z@}{-18\p@ \@plus -4\p@ \@minus -4\p@}{0.5em \@plus 0.22em \@minus 0.1em}{\normalfont\normalsize\bfseries\boldmath}}
\newif\ifnotes
\def\showtableofcontents{1}
\definecolor{DarkBlue}{RGB}{0,0,150}
\definecolor{DarkRed}{RGB}{150,0,0}
\definecolor{DarkGreen}{RGB}{0,150,0}
  \crefname{step}{Step}{Steps}
\newcommand{\authnote}[3]{\textcolor{#3}{[{\footnotesize {\bf #1:} { {#2}}}]}}
\newcommand{\alex}[1]{\ifnotes \authnote{Alex}{#1}{WildStrawberry} \fi}
\newcommand{\yael}[1]{\ifnotes \authnote{Yael}{#1}{Plum} \fi}
\newcommand{\vinod}[1]{\ifnotes \authnote{Vinod}{#1}{RedViolet} \fi}
\newcommand{\lisa}[1]{\ifnotes \authnote{Lisa}{#1}{BrickRed} \fi}
\newtheorem{theorem}{Theorem}[section]
\newtheorem{claim}[theorem]{Claim}
\newtheorem{corollary}[theorem]{Corollary}
\newtheorem{definition}[theorem]{Definition}
\newtheorem{remark}[theorem]{Remark}
\Crefname{importedtheorem}{Imported Theorem}{Imported Theorems}
\Crefname{theorem}{Theorem}{Theorems}
\Crefname{proposition}{Proposition}{Propositions}
\Crefname{claim}{Claim}{Claims}
\Crefname{lemma}{Lemma}{Lemmas}
\Crefname{conjecture}{Conjecture}{Conjectures}
\Crefname{corollary}{Corollary}{Corollaries}
\Crefname{construction}{Construction}{Constructions}
\Crefname{property}{Property}{Properties}
\theoremstyle{definition}
\Crefname{definition}{Definition}{Definitions}
\Crefname{assumption}{Assumption}{Assumptions}
\Crefname{notation}{Notation}{Notations}
\theoremstyle{remark}
\Crefname{question}{Question}{Questions}
\Crefname{remark}{Remark}{Remarks}
\Crefname{comment}{Comment}{Comments}
\Crefname{fact}{Fact}{Facts}
\newcommand{\Gen}{\mathsf{Gen}}
\def\Eval{\mathsf{Eval}}
\newcommand{\pST}{\; \middle| \;}
\newcommand{\q}{q}
\newcommand{\Nat}{\mathbb{N}} 
\newcommand{\zo}{\{0,1\}}
\newcommand{\U}{U} 
\newcommand{\C}{C} 
\renewcommand{\a}{a}
\renewcommand{\v}{v}
\newcommand{\good}{\mathsf{Good}}
\newcommand{\bad}{\mathsf{Bad}}
\newcommand{\CS}{\mathsf{CS}}
\newcommand{\IG}{Interactive Game}
\newcommand{\ig}{interactive game}
\newcommand{\BQP}{\mathsf{BQP}}
\newcommand{\ppt}{\mathsf{PPT}}
\newcommand{\PPT}{\mathsf{PPT}}
\newcommand{\QPT}{\mathsf{QPT}}
\newcommand{\Enc}{\mathsf{Enc}}
\newcommand{\Dec}{\mathsf{Dec}}
\newcommand{\ct}{\mathsf{ct}}
\newcommand{\QHE}{\mathsf{QHE}}
\newcommand{\QFHE}{\mathsf{QFHE}}
\newcommand{\LQHE}{\mathsf{QHE}}
\newcommand{\bQHE}{\mathsf{bQHE}}
\def\cA{{\cal A}}
\def\cB{{\cal B}}
\def\cC{{\cal C}}
\def\cD{{\cal D}}
\def\cF{{\cal F}}
\def\cG{{\cal G}}
\def\cH{{\cal H}}
\def\cI{{\cal I}}
\def\cP{{\cal P}}
\def\cQ{{\cal Q}}
\def\cR{{\cal R}}
\def\cT{{\cal T}}
\def\cV{{\cal V}}
\newcommand{\RegA}{\mathcal{A}}
\newcommand{\RegB}{\mathcal{B}}
\newcommand{\RegH}{\mathcal{H}}
\newcommand{\secp}{\lambda}
\newcommand{\poly}{\mathsf{poly}}
\newcommand{\negl}{\mathsf{negl}}
\newcommand{\sk}{\mathsf{sk}}
\newcommand{\Hyb}{\mathsf{Hyb}}
\newcommand{\Id}{\mathsf{Id}}
\newcommand{\brho}{\bm{\rho}}
\newcommand{\tensor}{\otimes}
\title{Quantum Advantage from Any Non-Local Game} 
\author{Anonymous Submission to CRYPTO 2022}
\institute{}
\author{Yael Kalai\thanks{MIT and Microsoft Research. E-mail: \texttt{yael@microsoft.com}.} \and Alex Lombardi\thanks{MIT. E-mail: \texttt{alexjl@mit.edu}. Research supported by a Charles M. Vest Grand Challenges Fellowship, an NDSEG Fellowship, and grants of the third author.} \and Vinod Vaikuntanathan\thanks{MIT. E-mail: \texttt{vinodv@mit.edu}. Research supported in part by DARPA under Agreement No. HR00112020023, a grant from the MIT-IBM Watson AI, a grant from Analog Devices and a Microsoft Trustworthy AI grant. Any opinions,
findings and conclusions or recommendations expressed in this material are those of the author(s) and do not
necessarily reflect the views of the United States Government or DARPA.} \and Lisa Yang\thanks{MIT. E-mail: \texttt{lisayang@mit.edu}. Research supported by an NSF Graduate Research fellowship and grants of the third author. \vinod{Lisa: please add ack.}}}
\date{\today}
\begin{document}
\maketitle

\begin{abstract} 
We show a general method of compiling any $k$-prover non-local game into a single-prover interactive game maintaining the same (quantum) completeness and (classical) soundness guarantees (up to negligible additive factors in a security parameter). Our compiler uses any quantum homomorphic encryption scheme (Mahadev, FOCS 2018; Brakerski, CRYPTO 2018) satisfying a natural form of correctness with respect to auxiliary (quantum) input. The homomorphic encryption scheme is used as a cryptographic mechanism to simulate the effect of spatial separation, and is required to evaluate $k-1$ prover strategies (out of $k$) on encrypted queries. 

In conjunction with the rich literature on (entangled) multi-prover non-local games starting from the celebrated CHSH game (Clauser, Horne, Shimonyi and Holt, Physical Review Letters 1969), our compiler gives a broad framework for constructing mechanisms to classically verify quantum advantage.
\end{abstract}

\vfill
\textbf{Keywords:} Non-local games, CHSH, Quantum advantage, Fully Homomorphic Encryption.


%

\iflncs
\else
\ifnum\showtableofcontents=1
{
\thispagestyle{empty}
\newpage
\setcounter{tocdepth}{2}
\tableofcontents
\thispagestyle{empty}
 }
\clearpage
\fi
\fi 

\section{Introduction}
\label{sec:intro}

Quantum computing promises to usher in a new era in computer science. There are several ongoing large-scale efforts by Google, Microsoft, IBM and a number of startups to demonstrate {\em quantum advantage}, that is, to construct a quantum machine capable of computations that no classical machine can replicate with comparable speed. Central to the demonstration of quantum advantage is the question of verification. That is, can a classical machine\footnote{One could relax this requirement to allow a verifier with {\em limited} quantum capabilities, for example, preparing and transmitting single-qubit states~\cite{aharonov2008interactive,fitz-kashefi}. In this paper, we deal exclusively with classical verifiers.} {\em check} that a given device is performing computations that no classical device is capable of?  

Such verification schemes for quantum advantage are central to the ongoing race to construct non-trivial quantum computers.  Several mechanisms have been proposed to classically verify quantum advantage:

\begin{enumerate}
\item \textbf{The Algorithmic Method}: Ask the quantum machine to do a computation, such as factor a large number~\cite{Shor97}, that is believed to be (sub-exponentially) hard for classical machines~\cite{LenstraLMP90}. This is the most straightforward way to check quantum advantage, yet it is concretely very expensive and requires millions of (high-fidelity) qubits which is likely infeasible for near-term quantum machines (see \cite{gidney-ekera} for a recent, optimized estimate).

\item \textbf{The Sampling Method:} Ask the quantum machine to (approximately) sample from a distribution, which is presumably hard to accomplish classically~\cite{BJS10,aaronson2010computational}. This results in tasks that are feasible for near-term quantum machines~\cite{sycamore}, yet suffers from two problems. First, checking that the output is correct (at least in the current proposals) is an exponential-time computation that quickly becomes infeasible for a classical machine. Secondly, the question of whether approximate sampling is classically hard is less clear and has come under question~\cite{sunway}, despite hardness results for exact sampling~\cite{BFNV18}.

\item \textbf{The Interactive Method:}  The most recent example of a way to demonstrate quantum advantage relies on using interactive proofs and the fact that quantum algorithms cannot necessarily be ``rewound'' which, in turn, is a consequence of the {\em quantum no-cloning principle}. Starting with the work of Brakerski, Christiano, Mahadev, Vazirani and Vidick~\cite{BCMVV18}, there have been a handful of proposals reducing the number of rounds~\cite{BKVV20}, improving computational efficiency and near-term feasibility~\cite{KCVY21}, and extending to more powerful guarantees such as self-testing~\cite{metger-vidick} and quantum delegation \cite{FOCS:Mahadev18a}. The practicality of the quantum machine in these protocols (eg. \cite{zhu2021interactive}) seems to fall somewhere between (1) and (2), i.e., the prover is more practical than with the algorithmic method but less than the sampling method.  
Compared to (2), verification by the classical machine is efficient.  
\end{enumerate}

\noindent
Meanwhile, from more than half a century ago, starting with the celebrated works of John Bell~\cite{Bell} and Clauser, Horne, Shimony and Holt~\cite{CHSH} to explain the Einstein-Podolsky-Rosen (EPR) paradox, there is a rich body of work that came up with several methods to test that two or more spatially separated, non-communicating quantum devices can generate (testable) correlations that no pair of classical devices can produce. This immediately gives us a test of quantum advantage of a very different kind, one whose validity relies on the (non-falsifiable) physical assumption that devices do not communicate.

\begin{enumerate}
\setcounter{enumi}{3}
\item \textbf{The Non-Local Method:}  Here, quantum advantage comes from {\em entanglement}. Ask for a pair (or more) of spatially separated, but entangled, quantum devices to perform a computation that no classical machine can replicate without communication.  A classical example is the famed CHSH game~\cite{CHSH}, or other violations of the Bell inequality. A rich class of non-local games have since been constructed and this is an active area of current research with many breakthroughs in the last few years (e.g. \cite{PhysRevLett.65.3373,Aravind:01,PhysRevLett.71.1665,greenberger2007going,ReichardtUV13} up until \cite{ji2020mipre}). While many of these games are attractive and super-efficient (e.g. using only a small constant number of qubits), this suffers from possible ``loopholes''. That is, {\em how does one ensure that the provers {\em do not} subliminally communicate in some way}?  Indeed, constructing so-called loophole-free demonstrations of Bell inequality violations is an active area of research~\cite{loopholefree}.  
\end{enumerate}

\subsection{Our Results}

A tantalizing question that comes out of the discussion above is the relation between techniques used to construct {\em multi-prover} non-local games and the ones used to construct \emph{single-prover} proofs of quantumness. On the one hand, the former relies on entanglement whereas the latter relies on quantum non-rewinding. On the other hand, the intellectual foundations of single-prover protocols for quantum advantage sometimes come from non-local games. Indeed, Kahanamoku-Meyer, Choi, Vazirani and Yao~\cite{KCVY21}, in a recent work, call their interactive single-prover protocol for quantum advantage a ``computational Bell test''. Yet, there have been no formal connections between multi-prover non-local games and single-prover interactive protocols for quantum advantage. 

The central result of this work is the construction of a {\em simple} and {\em general} compiler that gives us an interactive protocol demonstrating quantum advantage starting from two building blocks.

\begin{theorem}\label{thm:intro:main}
Given
\begin{itemize}
    \item {\em Any} $k$-prover non-local game quantum value $c$ and classical value $s$; and 
    \item {\em Any} quantum homomorphic encryption scheme satisfying correctness with respect to auxiliary input (Definition~\ref{def:QHE-aux}), that can implement homomorphic evaluation of {\em at least $k-1$} provers in the non-local game.
\end{itemize}
there is a $2k$-round\footnote{We refer to a message from the prover to the verifier or the verifier to the prover as a round.} single-prover interactive game with completeness $c$, realized by a quantum polynomial-time algorithm, and soundness $s + \negl(\secp)$, against any classical polynomial-time algorithm. Here, $\secp$ is a security parameter that governs the security of the quantum homomorphic encryption scheme.
\end{theorem}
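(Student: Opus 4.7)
The plan is to construct the compiled single-prover protocol explicitly, prove completeness via QHE correctness with auxiliary quantum input, and prove soundness by extracting a classical $k$-prover strategy from any cheating classical single prover via a hybrid argument that exploits the IND-CPA security of the QHE scheme.

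\textbf{Construction.} The verifier samples queries $(q_1, \ldots, q_k)$ from the non-local game's query distribution and a fresh QHE key pair $(\pk, \sk)$. The protocol proceeds in $k$ two-message rounds: for $i = 1, \ldots, k-1$, the verifier sends $\Enc_{\pk}(q_i)$ and the prover returns a ciphertext $\ct_i$; in round $k$, the verifier sends $q_k$ in the clear and the prover returns $a_k$ in the clear. The verifier decrypts $a_i := \Dec_{\sk}(\ct_i)$ for $i < k$ and accepts iff $(q_1, \ldots, q_k, a_1, \ldots, a_k)$ is a winning tuple for the non-local game.

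\textbf{Completeness.} The honest quantum prover prepares the non-local game's optimal entangled state, partitioned across registers $R_1, \ldots, R_k$. In round $i < k$, it homomorphically evaluates the $i$-th non-local prover's quantum strategy on $(\Enc_{\pk}(q_i), R_i)$, producing $\ct_i$; correctness with respect to auxiliary quantum input (Definition~\ref{def:QHE-aux}) ensures $\Dec(\ct_i)$ is distributed as the honest $i$-th non-local prover's output while preserving the state in the remaining registers $R_j$ ($j \neq i$), so the entangled state persists across rounds. In round $k$, the prover applies the $k$-th non-local strategy directly to $(q_k, R_k)$. Therefore $(a_1, \ldots, a_k)$ is distributed exactly as in the honest non-local play, yielding completeness $c$.

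\textbf{Soundness via extraction.} Given a classical PPT prover $P^*$ achieving winning probability $p$, we construct a classical $k$-prover strategy $S^*$ for the non-local game with winning probability at least $p - \negl(\secp)$; since the classical value of the non-local game is $s$, this yields $p \leq s + \negl(\secp)$. The strategy $S^*$ uses shared randomness $(\pk, \sk, r_1, \ldots, r_{k-1})$ where $r_j$ is fixed encryption randomness for a dummy ciphertext $\Enc_{\pk;r_j}(0)$. Prover $i$ receives $q_i$ and internally simulates $P^*$: at positions $j < i$ it sends the shared dummy $\Enc_{\pk;r_j}(0)$, and at position $i$ it sends either a fresh $\Enc_{\pk}(q_i)$ (if $i < k$) or $q_i$ in the clear (if $i = k$), outputting $a_i := \Dec_{\sk}(\ct_i)$ or $P^*$'s plaintext response, respectively.

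\textbf{Main obstacle.} The crux is showing $|p - p'| \leq \negl(\secp)$, where $p'$ is $S^*$'s winning probability. The plan is a sequence of hybrids interpolating between the real protocol (a single execution of $P^*$ seeing all real encrypted queries) and the extracted setup ($k$ parallel simulations, each seeing mostly dummy ciphertexts). Each hybrid transition replaces a single real encryption $\Enc(q_j)$ with the dummy $\Enc_{r_j}(0)$ in one of the simulations---a total of $O(k^2)$ transitions---with indistinguishability of adjacent hybrids following from the IND-CPA security of the QHE scheme. The principal technical subtlety, which is the hard part of the proof, is that each hybrid's winning probability is computed by decrypting $P^*$'s responses with $\sk$, whereas a standard CPA distinguisher does not have $\sk$; this is addressed by structuring the reduction so that, at the changed position, the CPA challenge is substituted while all other positions used in the winning check are handled with key material the reduction generates itself, allowing acceptance to be simulated in the reduction. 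A union bound over the $O(k^2)$ hybrid steps then yields the claimed $\negl(\secp)$ soundness loss, completing the proof.
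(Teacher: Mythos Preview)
Your construction and completeness argument match the paper's. The soundness argument, however, has a genuine gap---and not merely a missing detail, but a wrong extraction.

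Your extracted prover $i$ (for $i<k$) outputs $a_i = \Dec_{\sk}(\ct_i)$, where $\ct_i$ is $P^*$'s response after being fed a fresh $\Enc(q_i)$ at position $i$. In any hybrid step that swaps a real $\Enc(q_j)$ for the dummy $\Enc(0)$ inside some prover $i'$'s simulation, the CPA challenge lives under the key for position $j$. But evaluating the winning predicate in that hybrid still requires $a_j = \Dec_{\sk_j}(\ct_j)$ coming from \emph{prover $j$'s own} simulation, and prover $j$ necessarily encrypts its real query $q_j$ under that same position-$j$ key (otherwise its transcript is not distributed as in the protocol). Hence the reduction must hold exactly the key it is not allowed to have. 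Your proposed fix (``other positions handled with key material the reduction generates itself'') addresses positions $j'\neq j$ but not position $j$ itself; there is no way to generate an independent key for prover $j$'s position-$j$ encryption without simultaneously changing the distribution that the hybrid is supposed to preserve.

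This is not just an unfinished proof: the inequality $|p-p'|\le\negl(\secp)$ is false. Take $k=2$ and the CHSH predicate, and let $P^*$ be the deterministic classical prover that, on first ciphertext $c_1$, returns a ciphertext decrypting to the bit $g(c_1)$ (e.g.\ by freshly encrypting $g(c_1)$ under $\pk$, or by classically homomorphically evaluating the constant circuit), and on the third message returns $a_2=g(c_1)$. In the real interaction $a_1=a_2=g(c_1)$ always, so $a_1\oplus a_2=0$ and $p=\Pr[q_1q_2=0]=3/4$. In your extraction, $P^*_1$ feeds $P^*$ a fresh $c_1'=\Enc(q_1)$ while $P^*_2$ feeds it the shared dummy $c_1^0=\Enc(0)$; since these are independent ciphertexts, $a_1=g(c_1')$ and $a_2=g(c_1^0)$ are essentially independent bits, so $a_1\oplus a_2$ is close to uniform and $p'\approx 1/2$. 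The point is that $P^*$ can correlate its two answers through the single ciphertext $c_1$ it sees, and your extraction necessarily severs that correlation because $P^*_1$ and $P^*_2$ cannot feed $P^*$ the same first ciphertext (one must encode $q_1$, the other must not depend on $q_1$).

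The paper's extraction is structurally different and sidesteps both problems. Only $P^*_k$ simulates $\widetilde P$ (on hard-coded dummy ciphertexts). Each earlier $P^*_\ell$ is defined by an \emph{argmax}: it outputs the $a_\ell$ maximizing the acceptance probability when earlier answers come from $P^*_1,\dots,P^*_{\ell-1}$ and later ones from $\widetilde P$ continued on hard-coded dummies. The argmax is approximated by brute force in time $2^{|q_\ell|+|a_\ell|}\poly(\secp)$, and---crucially---never requires decrypting anything under the challenge key, so the CPA reduction runs. In the CHSH example above, the argmax-based $P^*_1$ simply outputs $a_1=g(\ct_1)$ (matching $P^*_2$'s constant output) and recovers value $3/4$. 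This argmax is also why the paper needs $T$-security for $T=2^{\sum_i(|q_i|+|a_i|)}\cdot\poly(\secp)$, a quantitative feature your approach would not predict.
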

This gives us a rich class of protocols by mixing and matching any non-local game (CHSH, Magic Square, Odd-Cycle Test, GHZ, and many more) and any powerful enough quantum homomorphic encryption scheme (\cite{Mah18a,Bra18} both satisfy the appropriate correctness property; see \cref{app:locality}).

For example, an instantiation of our method using the CHSH game and the quantum FHE scheme of Mahadev~\cite{Mah18a} gives us a protocol that already improves on the state of the art in several respects:

\begin{itemize}
    \item Compared to \cite{BCMVV18}, the protocols that come out of our framework do not require the adaptive hardcore bit property, and therefore can be instantiated from a variety of assumptions including Ring Learning with Errors (Ring-LWE), resulting in better efficiency.
    \item Compared to \cite{BKVV20}, our protocols do not require random oracles. \cite{BKVV20} uses random oracles in a crucial way to obtain an instantiation from Ring-LWE, bypassing the adaptive hardcore bit property. They simultaneously achieve a two-round protocol in contrast to the $4$-round protocol of \cite{BCMVV18}. 
    \item Compared to \cite{KCVY21}, our protocols require 4 rounds as opposed to their 6. 
\end{itemize}


\iflncs
\medskip\noindent
\fi
Our framework elucidates interactive protocols for quantum advantage by separating them into an information-theoretic component (namely, the non-local game) and a cryptographic component (namely, the QFHE). We believe that the richness and extreme simplicity of our transformation could eventually lead to a protocol that concretely improves on the family of protocols in (3) above in a significant way. Philosophically, a remarkable aspect of our framework is that it uses cryptography as a bridge to connect two apparently different sources of quantumness, namely {\em entanglement} (used to construct non-local games) and {\em non-rewinding} (which underlies the reason why single-prover proofs of quantumness as in \cite{BCMVV18} work).

\iflncs
\medskip\noindent
\fi 
We proceed to describe a self-contained outline of our transformation below. 

\subsection{Technical Overview}

Our framework is remarkable in its simplicity, drawing  inspiration from similar compilers in the classical world~\cite{BMW98,C:BitChi12,KalaiRR13,KalaiRR14}.
For the rest of the introduction, we focus on compiling {\em two-prover} non-local games and  describe how to extend our construction to $k$ provers in Section~\ref{sec:k-provers}. 
The basic idea is to convert any non-local game into a single prover protocol by sending the prover the queries sent to \emph{both} provers in the non-local game.  It is easy to see that if we do this ``in the clear," we lose soundness, since by giving both queries to a single prover, we lose the spatial separation condition. This is precisely where the encryption scheme comes into play.

\paragraph{Using Cryptography to Enforce Spatial Separation.}

We enforce spatial separation using an encryption scheme. Specifically, in our protocol the verifier samples two queries $q_1, q_2$ by emulating the verifier in the non-local game, and the interactive protocol proceeds in rounds, as follows:
\begin{enumerate}
    \item The verifier sends $\Enc(q_1)$ to the prover. 
    \item The prover sends an encrypted answer $\Enc(a_1)$.
    \item The verifier then sends the second query $q_2$ (in the clear).
    \item The prover replies with an answer $a_2$. 
\end{enumerate} 
The verifier accepts if the decrypted transcript would be accepted in the non-local game.  

\paragraph{Completeness.}  We use the properties of the underlying encryption scheme to argue that any strategy of the (quantum) provers in the non-local game can be emulated in our protocol (partially ``under the hood" of the encryption scheme). For this, we need our homomorphic encryption scheme to have the following two properties:

\begin{itemize}
    \item It needs to support the quantum operations done by the first prover in the non-local game.
    \item It needs to be {\em entanglement preserving} in the following sense: if a quantum circuit $C$ is defined using a register $\RegA$
    ~and is homomorphically evaluated using a state $\ket{\Psi}_{\RegA\RegB}$ that has entanglement between $\RegA$ and another register $\RegB$ (which is not input to $C$), the decrypted output $y$ should remain correlated with $\RegB$ (as it would be if $C$ were evaluated using the $\RegA$ register of $\ket{\Psi}$). This property can be enforced via a natural ``aux-input correctness'' property (\cref{def:QHE-aux}).  
\end{itemize}  
We view correctness with respect to auxiliary input as a natural requirement that one should expect to hold for a ``typical'' $\QHE$ scheme. In \cref{app:locality}, we show that it follows from ``plain'' correctness under mild structural assumptions; namely, if decryption is ``bit-by-bit'' and homomorphic evaluation is ``local.'' In particular, this implies that seminal homomorphic encryption schemes due to Mahadev and Brakerski~\cite{Mah18a,Bra18} satisfy these two properties.

\begin{remark}
\emph{We note that a 2-message version of this protocol was used by Kalai, Raz and Rothblum~\cite{KalaiRR13,KalaiRR14} (and in followup works) to obtain a 2-message delegation scheme for classical computations. In this 2-message variant, the verifier simply sends all of the queries in the first message each encrypted under independently chosen keys, and the prover computes each answer homomorphically. This 2-message variant has the desired completeness property described above.}

\emph{
However, it  {\em cannot} be used as a protocol that has a gap between classical and quantum winning probabilities.  This is the case since the soundness of the resulting 2-message delegation scheme is proven only if the original non-local game has {\em non-signaling soundness}, namely, assuming the provers cannot cheat via a non-signaling (and  therefore, a quantum) strategy. In other words, classical provers may be able to implement non-signaling strategies, and these strategies can have value which is at least as high as the quantum value.
~Thus, we cannot argue that in the \cite{KalaiRR14}-like 2-message protocol, a classical cheating prover cannot emulate this non-signaling strategy. Instead, we consider the 4-message variant described above.} 

\emph{This 4-message protocol bears some resemblance to (although is significantly simpler than) the classical succinct argument system of \cite{C:BitChi12}, which is also concerned with compiling MIPs into single-prover protocols. }
\end{remark}

\paragraph{Soundness.} We argue that any computationally bounded {\em classical} cheating prover in the 4-message protocol can be converted into a {\em local} prover strategy in the 2-player game, with roughly the same acceptance probability (see Theorem~\ref{thm:main}).  
This allows us to argue quantum advantage: Namely, an honest quantum prover can obtain the (quantum) value of the 2-player game, whereas any classical prover can obtain only the classical value of the 2-player game.

To prove our desired soundness condition, we rewind the (classical) prover, something that cannot be done with a quantum prover, nor can be done in the 2-message setting. 
Specifically, fix a classical  cheating prover $P^*$ for the 4-message protocol, and assume without loss of generality that it is deterministic (otherwise, fix its random coins to ones that maximize the success probability). The corresponding local provers $(P^*_1,P^*_2)$ for the 2-player game are defined as follows:
\begin{itemize}
\item Fix the first 2 messages in the 4-message protocol, by choosing any $q'_1$ (e.g., it can be the all-zero string), computing $\ct_1=\Enc(q'_1)$ and emulating $P^*$ to compute $\ct_2=P^*(\ct_1)$. 
\item $P^*_2$ has $(\ct_1,\ct_2)$ hardwired.\footnote{We note that it suffices to hardwire $\ct_1$ since $\ct_2=P^*(\ct_1)$ can be computed from $\ct_1$.}  On input a  query~$q_2$, it emulates the response of $P^*$, conditioned on the first two messages being $(\ct_1,\ct_2)$, to obtain~$a_2=P^*(\ct_1,\ct_2,q_2)$.  It outputs~$a_2$.
\item $P^*_1$ has the truth-table of $P^*_2$ hardwired.  On input $q_1$ it computes for every possible $a_1$ the probability (over $q_2$ sampled conditioned on $q_1$) that the verifier accepts $(q_1, a_1, q_2, P^*_2(q_2))$ and sends $a_1$ with the maximal acceptance probability. 
\end{itemize} 

To argue that $(P^*_1,P^*_2)$ convinces the verifier to accept with essentially the same probability that~$P^*$ does, we rely on the security of the encryption scheme.  
Namely, we argue (by contradiction) that if the verifier accepts with probability significantly smaller than the acceptance probability of $P^*$ in the 4-message protocol, then there exists an adversary~$\cA$ that breaks semantic security of the underlying encryption scheme.  

Specifically, $\cA$ is given a challenge ciphertext~$\ct$, which it will use to define $P^*_1$ and $P^*_2$ (as above), and will use $P^*_1$ and $P^*_2$ in his attack.  The first barrier is that $P^*_1$ may not be efficient, which results with $\cA$ being inefficient.  Thus, we first argue that (a good enough approximation of) $P^*_1$ can be emulated in time $2^{|q_1|+|a_1|}\cdot\poly(\secp)$, which results in $\cA$ running in that time as well. Loosely speaking, this is done by estimating for every $a_1$, the probability of acceptance, and taking $a_1$ with the maximal acceptance probability.  This explains the exponential blowup in~$a_1$.  To estimate the probability that $(q_1,a_1)$ is accepted (w.r.t.~$P^*_2$) we do an empirical estimation by choosing many $q_2$'s from the residual query distribution (conditioned on~$q_1$) and compute the fraction of $q_2$'s for which $V(q_1,q_2,a_1,P^*(q_2))=1$. We use the Chernoff bound to argue that this empirical estimation is close to the real probability of acceptance. Note that this estimation requires  sampling $q_2$ from the residual query distribution (conditioned on~$q_1$), and it is not clear that this can be done efficiently.  Thus, we hardwire into $\cA$ many such samples for each and every possible $q_1$, which results with the exponential blowup in~$q_1$.\footnote{However, this blowup can be avoided for many non-local games; for example, if $q_1$ and $q_2$ are independent, or if the game is a parallel repetition of a constant-size game. For the latter case, see \cref{subsec:poly-assumptions} for more details.}

Given this $ 2^{|q_1| + |a_1|}\poly(\secp)$-time implementation of $P^*_1$, the hardness reduction proceeds as follows. Loosely speaking, given $q_1,q_1',\ct$ where $\ct$ is either $\Enc(q_1)$  or $\Enc(q_1')$ with equal probability, 
 we run the 2-prover game with $(P^*_1,P^*_2)$ corresponding to $\ct$ and with the first query being $q_1$, and then run it again with the first query being $q_1'$.  If the verifier accepts exactly one of these executions, then we guess $b$ to be the one corresponding to the winning execution; otherwise, we output a random guess for $b$.

 Note that in our attack we broke the security in time $2^{|q_1|+|a_1|}\cdot\poly(\secp)$ and thus need to assume that the encryption scheme has that level of security. For constant-size games, this is equivalent to polynomial security; in general, this requires assuming the sub-exponential security of the underlying FHE. However, as discussed in \cref{subsec:poly-assumptions}, it is possible to rely on polynomially-secure FHE for certain superconstant-size non-local games such as parallel repetitions of a constant-size game. We refer the reader to Section~\ref{sec:compiler} for the details, and to the proof of Theorem~\ref{thm:main} for the formal analysis.

\section{Preliminaries}

\def\PT{\mathsf{PT}}

We let $\PT$ denote deterministic polynomial time,  $\PPT$ denote probabilistic polynomial time and $\QPT$ denote quantum polynomial time.
For any random variables $A$ and $B$ (possibly parametrized by a security parameter $\secp$), we use the notation $A\equiv B$ to denote that $A$ and $B$ are identically distributed, and $A\approx_s B$ to denote that $A$ and $B$ have negligible ($\secp^{-\omega(1)}$) statistical distance. We use $A \approx_c B$ to denote that $A$ and $B$ are computationally indistinguishable, namely for every $\PPT$ distinguisher $D$, $|\Pr[D(A) = 1]-\Pr[D(B)=1]| = \secp^{-\omega(1)}$. 

\paragraph{Quantum States.} Let $\cH$ be a Hilbert space of finite dimension~$2^n$ (thus, $\cH\simeq \mathbb{C}^{2^n}$).
A (pure, $n$-qubit) quantum state $\ket{\Psi}\in \cH$ is an element of the form
$$\ket{\Psi}=\sum_{b_1,\ldots,b_n\in\{0,1\}} \alpha_{b_1,\ldots,b_n}\ket{b_1,\ldots,b_n}
$$
where $\{\ket{b_1,\ldots,b_n}\}_{b_1,\ldots,b_n\in\{0,1\}}$ forms an orthonormal basis of $\cH$, $\alpha_{b_1,\ldots,b_n}\in \mathbb{C}$ and 
$$
\sum _{b_1,\ldots,b_n\in\{0,1\}}|\alpha_{b_1,\ldots,b_n}|^2=1.
$$
We refer to $n$ as the number of registers of $\ket{\Psi}$. 

We denote by $M(\ket{\Psi})$ the outcome of measuring $\ket{\Psi}$ in the standard basis (throughout this work we are only concerned with standard basis measurements).  
For any set of registers $\cI\subseteq[n]$, we denote by $M_\cI(\ket{\Psi})$ the outcome of measuring only the $\cI$ registers of $\ket{\Psi}$ in the standard basis.

A \emph{mixed state} $\brho$ over $\cH$ is a density operator (we use $\mathbf{S}(\cH)$ to denote the space of density operators on $\cH$) normalized so that $\Tr(\brho) = 1$. Every pure state $\ket{\Psi}$ has a corresponding rank $1$ mixed state $\ketbra{\Psi}$ such that for any PSD projection $\Pi$, we have $||\Pi \ket{\Psi}||^2 = \Tr(\Pi \ketbra{\Psi})$. 

We sometimes divide the registers of $\RegH$ into named registers, denoted by calligraphic upper-case letters, such as $\cA$ and $\cB$, in which case we also decompose the Hilbert space into $\cH = \cH_{\cA} \otimes \cH_\cB$, so that each pure quantum state $\ket{\Psi}$ is a linear combination of quantum states $\ket{\Psi_\cA}\otimes\ket{\Psi_\cB} \in \cH_\cA\otimes\cH_\cB$.  

For any mixed state $\brho$, we denote by $\brho_{\RegA}$ the reduced density operator
\[\brho_{\RegA}=\Tr_\cB(\brho)\in \mathbf{S}(\cH_\cA),
\]
where $\Tr_\cB$ is the ``partial trace'' linear operator defined by 

\[ \Tr_\cB(\ket{\Psi_\cA}\bra{\Psi_\cA}\otimes \ket{\Psi_\cB}\bra{\Psi_\cB})=\ket{\Psi_\cA}\bra{\Psi_\cA}\cdot \Tr(\ket{\Psi_\cB}\bra{\Psi_\cB}).
\]

\alex{define trace distance}

\paragraph{Quantum Ciruits and Locality.} A quantum circuit is a sequence of elementary quantum gates (taken from some complete basis) and measurement operations. These operations are applied to an initial state $\ket{\psi}$ and result in some final state $\ket{\psi'}$. For further definitions related to quantum circuits and gate types, we refer the reader to~\cite{NC00}. 

One direct consequence of this model is a form of \textbf{locality}: if $\ket{\psi} = \ket{\psi}_{\RegA\RegB}$ is shared between two registers ($\RegA$ and $\RegB$), operations of the form $C\tensor \Id_{\RegB}$ (where $C$ is a quantum circuit acting only on $\RegA$) can be performed \emph{without possession of the register $\RegB$}.

\subsection{Non-Local Games}

\begin{definition}[$k$-Player Non-local Game]
A $k$-player non-local game $\cG$ consists of a $\ppt$ sampleable query distribution $\cQ$ over $(\q_1,\hdots, q_k)\in\left(\zo^{n}\right)^k$ and a polynomial time verification predicate $\cV(\q_1,\q_2, \hdots, q_k, \a_1,\a_2, \hdots, a_k)\in\zo$, where each $a_i \in \zo^{m}$. The classical value and the quantum value of $\cG$ are defined below.

\begin{itemize}
    \item \textbf{Classical value:} The classical value $v$ of $\cG$ is defined as:
    \[\max_{P_1,\hdots, P_k:\{0,1\}^n \to \{0,1\}^m} \underset{(q_1, \hdots, q_k)\gets \cQ}{\Pr}\left[\cV(\q_1,\hdots, q_k,P_1(\q_1),\hdots, P_k(q_k))=1\right]
    \]
    
    \item \textbf{Quantum (entangled) value:} The quantum value $\v^*$ is defined as:  
    \[\max_{\substack{\ket{\Psi}\in \RegH_1 \tensor \hdots \tensor \RegH_k\\ U_i\in U(\RegH_i \tensor \mathbb C^n), 1\leq i\leq k}}
    \underset{(q_1, \hdots, q_k)\gets \cQ}{\Pr}\left[\cV(\q_1,\hdots, q_k,a_1,\hdots, a_k)=1\right]
    \]
    where the maximum is over all $k$-partite states $\ket{\Psi}\in \RegH_1\tensor \hdots \tensor \RegH_k$ (each $\cH_i$ is an arbitrary finite-dimensional Hilbert space), unitaries $U_i$ acting on $\log \dim_{\mathbb C} \RegH_i + n$ qubits respectively, and answers $a_i$ computed by applying $U_1 \tensor \hdots \tensor U_k$ to $\ket{\Psi}\tensor \ket{q_1\hdots q_k}$, and measuring the first $m$ qubits in each $\RegH_i$; that is, 
    
    \[(a_1,a_2, \hdots, a_k)\gets M_{\cI}(U_1\tensor \hdots \tensor U_k \left(\ket{\Psi}\tensor \ket{q_1, \hdots, q_k} \right))
    \]
    for $\cI = \{1, \hdots, m\} \times [k]$. We remark that without loss of generality, $\ket{\Psi}$ can be (and is above) taken to be a pure state. 
\end{itemize}
\end{definition}
We are interested in non-local games where quantum strategies can win with probability strictly more than any classical strategy, that is ones for which $v^* > v$.

\begin{remark}
\emph{In this work, it is crucial to consider the complexity of the honest provers in a non-local game, which is often not a parameter of interest in the literature but instead hidden in the description of the provers as a tuple of unitaries $(U_1,\hdots, U_k)$ and state $\ket{\Psi}$.}

\emph{We explicitly consider non-local games where the each prover's unitary $U_i$ can be implemented as a quantum circuit $C_i$ of size polynomial in a security parameter $\secp$. When $\cG$ has constant size (such as the CHSH or Magic square games) this holds automatically, but this is a non-trivial requirement when the size of $\cG$ grows with $\secp$.}
\end{remark}

\subsection{Quantum Homomorphic Encryption} \label{sec:QHE}

We define the notion of quantum homomorphic encryption~\cite{Mah18a,Bra18} that is central to our framework. Unlike in \cite{Bra18}, our definition requires a form of \emph{correctness with respect to auxiliary input}. However, we show that this definition holds for QHE schemes satisfying mild additional requirements, and in particular holds for the \cite{Mah18a,Bra18} schemes. 

\alex{I commented out the ``bit-by-bit'' definition for simplicity -- we can use it in the appendix if it helps there.}

\begin{definition}[Quantum Homomorphic Encryption (QHE)]\label{def:QHE-aux}
A quantum homomorphic encryption scheme $\QHE=(\Gen,\Enc,\Eval,\Dec)$ for a class of quantum circuits $\cC$ is a tuple of algorithms with the following syntax:
\begin{itemize}
    \item $\Gen$ is a $\PPT$ algorithm that takes as input the security parameter $1^\secp$ and outputs a (classical) secret key $\sk$ of $\poly(\secp)$ bits;
    \item $\Enc$ is a $\PPT$ algorithm that takes as input a secret key $\sk$ and a classical input $x$, and outputs a ciphertext $\ct$;
    \item $\Eval$ is a $\QPT$ algorithm that takes as input a tuple $(\C,\ket{\Psi},\ct_{\mathrm{in}})$, where $\C:\cH\times(\mathbb{C}^2)^{\tensor n}\rightarrow (\mathbb{C}^2)^{\tensor m}$ is a quantum circuit, $\ket{\Psi}\in\cH$ is a quantum state, and $\ct_{\mathrm{in}}$ is a ciphertext corresponding to an $n$-bit plaintext. 
    $\Eval$ computes a quantum circuit
    $\Eval_C(\ket{\Psi}\tensor \ket{0}^{\poly(\secp, n)},\ct_{\mathrm{in}})$ which outputs a ciphertext $\ct_{\mathrm{out}}$. If $C$ has classical output, we require that $\Eval_C$ also has classical output.

    \item $\Dec$ is a $\PT$ algorithm that takes as input a secret key $\sk$ and ciphertext $\ct$, and outputs a state $\ket{\phi}$. Additionally, if $\ct$ is a classical ciphertext, the decryption algorithm outputs a classical string $y$.
\end{itemize}

The above syntax is more general than the form required in \cite{Bra18}; we elaborate on this difference below. We require the following two properties from $(\Gen,\Enc,\Eval,\Dec)$:
\begin{itemize}
    \item \textbf{Correctness with Auxiliary Input:} For every security parameter $\secp\in\Nat$, any quantum circuit $C:\cH_{\RegA} \times (\mathbb{C}^2)^{\tensor n} \to \{0,1\}^*$ (with classical output), any quantum state $\ket{\Psi}_{\RegA \RegB} \in\cH_{\RegA} \tensor \cH_{\RegB}$,  any message $x\in \{0,1\}^n$, any secret key $\sk \gets \Gen(1^\secp)$ and any ciphertext $\ct \gets \Enc(\sk,x)$, the following states have negligible trace distance:
    \begin{description}
        \item \textit{Game $1$.} Start with $(x, \ket{\Psi}_{\RegA \RegB})$. Evaluate $ C$ on $x$ and register $\RegA$, obtaining classical string $y$. Output $y$ and the contents of register $\RegB$.
        \item \textit{Game $2$.} Start with $\ct \gets \Enc(\sk, x)$ and $\ket{\Psi}_{\RegA \RegB}$. Compute  $\ct' \gets\Eval_C(\cdot \tensor \ket{0}^{\poly(\secp, n)} ,\ct)$ on register $\RegA$. Compute $y'= \Dec(\sk,\ct')$. Output $y'$ and the contents of register $\RegB$.
    \end{description}
    
    \item \textbf{$T$-Classical Security:} For any two messages $x_0, x_1$ and any $\poly(T(\secp))$-size classical circuit ensemble~$\cA$:
    \[\left|\Pr\left[\cA(\ct_0) = 1 \pST
        \begin{array}{l}
        \sk\gets\Gen(1^\secp)\\
        \ct_0\gets\Enc(\sk,x_0)\\
        \end{array}
        \right]
    -\Pr\left[\cA(\ct_1) = 1 \pST
        \begin{array}{l}
        \sk\gets\Gen(1^\secp)\\
        \ct_1\gets\Enc(\sk,x_1)\\
        \end{array}
        \right]\right|
    \le \negl(T(\secp))\enspace.
    \]
\end{itemize}
\end{definition}

In words, ``correctness with auxiliary input'' requires that if QHE evaluation is applied to a register $\RegA$ that is a part of a joint (entangled) state in $\cH_{\RegA}\tensor \cH_{\RegB}$, the entanglement between the QHE evaluated output and $\RegB$ is preserved.

\begin{remark}
\emph{A \emph{quantum fully homomorphic encryption ($\QFHE$)} is a $\QHE$ for the class of all poly-size quantum circuits. While \cite{Mah18a,Bra18} construct $\QFHE$ (with security against quantum distinguishers),
~weaker forms of $\QHE$ may yield more efficient quantum advantage protocols (see Section~\ref{sec:future-work} for discussion).}
\end{remark}

\begin{remark}
\emph{In our definition of security, we only
~consider classical attacks. Classical security is sufficient for the purposes of this work as  protocols for quantum advantage are required to have quantum completeness and classical soundness.}
\end{remark}

For the purposes of this paper, it suffices to know the following claim about the instantiability of \cref{def:QHE-aux}.

\begin{claim} \label{claim:QFHE}
The \cite{Mah18a,Bra18} QFHE schemes satisfy \cref{def:QHE-aux} with correctness holding for the class of all poly-size quantum circuits.
\end{claim}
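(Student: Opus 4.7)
The plan is to reduce \cref{claim:QFHE} to plain (auxiliary-input-free) correctness of the Mahadev and Brakerski QFHE schemes by invoking the structural result of \cref{app:locality}, which states that correctness with auxiliary input follows from plain correctness as soon as (i) homomorphic evaluation $\Eval_C$ is a \emph{local} quantum operation — i.e., a quantum circuit acting only on the ciphertext register and fresh ancillas, never on any external register — and (ii) decryption of classical-output ciphertexts proceeds \emph{bit-by-bit}, i.e., the output bits of $\Dec(\sk,\ct')$ are obtained by applying a fixed classical function coordinate-wise to the bits of $\ct'$. Once (i) and (ii) are verified for \cite{Mah18a,Bra18}, the equivalence of Game $1$ and Game $2$ on a joint state $\ket{\Psi}_{\RegA\RegB}$ reduces, via the locality principle recalled in the preliminaries, to the statement of plain correctness applied on register $\RegA$ while $\Id_{\RegB}$ acts on $\RegB$.

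First I would verify locality of $\Eval$. In both \cite{Mah18a,Bra18}, homomorphic evaluation is constructed gate-by-gate over a universal gate set (Clifford$+T$), where each gate is implemented by a fixed quantum sub-circuit that touches only the encrypted qubit register together with freshly prepared ancillas and classical evaluation of the associated key-update information; no step of the procedure ever addresses an auxiliary register $\RegB$ external to the ciphertext. Hence $\Eval_C$ can be written as a quantum circuit of the form $\Eval_C \otimes \Id_{\RegB}$ when applied to a joint state, which is exactly the locality hypothesis of \cref{app:locality}.

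Second I would verify bit-by-bit decryption. For classical-output ciphertexts (which is the only case required by \cref{def:QHE-aux}, since $C$ has classical output and hence $\Eval_C$ produces a classical ciphertext), the output ciphertext in both schemes is a tuple of independent classical ciphertexts — one per output bit — under the underlying (classically secure) LWE-based encryption. Thus $\Dec(\sk,\cdot)$ decomposes as a coordinate-wise classical map, satisfying hypothesis (ii).

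Given (i) and (ii), \cref{app:locality} yields the conclusion: the reduced state on $\RegB$ after Game $2$ coincides (up to negligible trace distance, inherited from plain correctness) with the reduced state on $\RegB$ after Game $1$, jointly with the classical output $y' = y$. The main subtlety I anticipate is to handle the intermediate measurement outcomes and classical key-update computations performed inside $\Eval$: these are sampled during evaluation and could, a priori, create correlations with $\RegB$. However, because each measurement acts solely on the ciphertext/ancilla register, the joint distribution of (measurement outcomes, $\RegB$) factors in exactly the same way as in the plain-correctness setting, so a direct hybrid over gates — replacing homomorphic evaluation of each gate by plaintext evaluation — closes the argument without disturbing $\RegB$.
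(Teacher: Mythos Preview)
Your plan---verify structural hypotheses on the \cite{Mah18a,Bra18} schemes and then invoke the reduction in \cref{app:locality}---is exactly the paper's approach. One imprecision: the second hypothesis the appendix actually uses is not ``bit-by-bit decryption of the output ciphertext'' but \emph{honest decryption correctness}, i.e., that $\Dec(\sk,\Enc(\sk,\cdot))$ acts as the identity (this is why the appendix's argument works: it encrypts the \emph{entire} state on $\RegA\RegB$, applies plain correctness to $C\otimes\Id_{\RegB}$ on the full encrypted state, uses locality to see that $\Eval$ leaves the encrypted $\RegB$ untouched, and then uses honest decryption correctness to strip the encryption from $\RegB$). Your condition (ii) is true for the schemes but is not the property that drives the reduction.

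Relatedly, your final paragraph's ``direct hybrid over gates'' is not how the appendix proceeds and, as stated, would not close the gap: plain correctness is asserted only for states on the encrypted wires, and agreement of two channels on all inputs does not by itself imply agreement when tensored with $\Id_{\RegB}$ on an entangled reference (one incurs a dimension factor). The paper avoids this entirely by pulling $\RegB$ inside the encryption and applying plain correctness to the joint state; with that trick in hand, the gate-level hybrid is unnecessary.
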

\cref{claim:QFHE} can be verified by inspecting the constructions given in \cite{Mah18a,Bra18}. In \cref{app:locality}, we show mild generic conditions under which a QFHE scheme satisfies correctness with respect to auxiliary input, and sketch a proof of \cref{claim:QFHE} .
\section{Our Compiler: From Non-Local Games to Interactive Protocols}\label{sec:compiler}

In this section, we show how to use a quantum homomorphic encryption scheme satisfying aux-input correctness (\cref{def:QHE-aux}) to convert a 2-prover non-local game into a {\em single-prover} interactive protocol with {\em computational soundness}.

\begin{definition}
[Single-Prover Computationally Sound \IG]
A single-prover computationally sound ($\CS$) \ig~$\cG$ consists of an interactive $\ppt$ verifier $\cV$ that takes as input a security parameter $1^\secp$ and interacts with an interactive prover. The classical (computationally sound) value and the quantum (computationally sound) value of $\cG$ are defined below.
\begin{itemize}
    \item \textbf{Classical $\CS$ value:} $\cG$ has classical $\CS$ value $\geq \v$ if and only if  there exists an  interactive polynomial-size Turing machine~$\cP$ such that for every $\secp\in\mathbb{N}$,
    $$
    \Pr[(\cP,\cV)(1^\secp)=1]\geq v
    $$
    where the probability is over the random coin tosses of $\cV$, and where $(\cP,\cV)(1^\secp)\in\{0,1\}$ denotes the output bit of $\cV(1^\secp)$ after interacting with $\cP$.
    \item \textbf{Quantum $\CS$ value:} $\cG$ has quantum $\CS$ value $\geq \v^*$ if and only if there exists a Hilbert space $\cH$ and a quantum state $\ket{\Psi}\in \cH$ and an interactive $\QPT$ \yael{need to define ``interactive $\QPT$"}  prover $\cP$ such that for every $\secp\in\mathbb{N}$,
    $$
    \Pr[(\cP(\ket{\Psi}),\cV)(1^\secp)=1]\geq \v^*
    $$
    where the probability is over the randomness of $\cP$ and $\cV$.
    \end{itemize}
\end{definition}

\subsection{Our Transformation, $k=2$ case.} \label{sec:trans}
Fix a quantum homomorphic encryption scheme $\LQHE=(\Gen,\Enc,\Eval,\Dec)$ for a class of quantum circuits $\cC$ (e.g. \cite{Mah18a,Bra18}, see \cref{claim:QFHE}). We present a $\PPT$ transformation $\cT$ that converts any $2$-prover non-local game $\cG=(\cQ,\cV)$ into a single-prover computationally sound \ig~$\cT^\cG$ (asociated with security parameter $\secp$), defined as follows.
\begin{enumerate}
    \item The verifier samples $(\q_1,\q_2)\gets\cQ$, $\sk\gets\Gen(1^\secp)$, and $\hat{\q}_1\gets\Enc(\sk,\q_1)$. In the first round the verifier sends $\hat{\q}_1$ and in the third round he sends~$q_2$.
    \item The verifier, upon receiving $\hat{a}_1$ from the prover  in the first round, and $a_2$ in the second round, accepts if and only if $\cV(q_1,q_2,\Dec(\sk,\hat{a}_1),a_2)=1$.
\end{enumerate}

\begin{theorem}\label{thm:main}
Fix any $\LQHE$ scheme for a circuit class $\cC$, and any 2-player non-local game $\cG=(\cQ,\cV)$ with classical value $v$ and quantum value $\v^*$, such that the value $\v^*$ is obtained by a prover strategy $(C^*_1,C^*_2)$ with a quantum state $\ket{\Psi}\in\cH_\cA\otimes \cH_\cB$ where $C^*_1(\ket{\Psi}_\cA,\cdot)\in \cC$. Denote by $|q_1|$ and $|a_1|$ the lengths of the query and answer of the first prover, respectively.   If the underlying $\LQHE$ encryption scheme is $T$-secure, for  $T(\secp)=2^{|q_1|+|a_1|}\cdot\poly(\secp)$, then the following holds:
\begin{enumerate}
    \item The quantum $\CS$ value of $\cT^{\cG}$ is at least $\v^*$. 
    \item The classical $\CS$ value of $\cT^{\cG}$ is at most $\v+\negl(\secp)$.
\end{enumerate}
\end{theorem}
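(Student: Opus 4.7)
For part (1), completeness, I would construct an honest quantum prover $\cP$ directly from the optimal non-local game strategy $(C_1^*, C_2^*, \ket{\Psi})$ with $\ket{\Psi} \in \cH_\cA \otimes \cH_\cB$. Upon receiving the encrypted query $\hat q_1$, $\cP$ runs $\Eval_{C_1^*}$ on register $\cA$ together with $\hat q_1$, producing a classical ciphertext $\hat a_1$ that it sends back; upon receiving $q_2$ in the clear it applies $C_2^*$ to register $\cB$ on input $q_2$ and returns $a_2$. By the aux-input correctness of $\LQHE$ (Definition~\ref{def:QHE-aux}), the joint distribution of $(\Dec(\sk, \hat a_1), \text{register } \cB)$ after $\Eval_{C_1^*}$ has negligible trace distance from that of $(a_1, \text{register } \cB)$ produced by running $C_1^*$ on $\cA$ in the clear; subsequently applying $C_2^*$ to $\cB$ therefore gives an $(a_1, a_2)$ distribution negligibly close to the honest non-local transcript, and the verifier accepts with probability at least $v^* - \negl(\secp)$.

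For part (2), soundness, fix any $\poly(\secp)$-time deterministic classical $P^*$ for $\cT^\cG$ with acceptance probability $p^*$. The plan is to extract a 2-player strategy $(P_1^*, P_2^*)$ for $\cG$ achieving value $p \geq p^* - \negl(\secp)$, from which $p^* \leq p + \negl \leq v + \negl$ follows by definition of the classical value. Fix any dummy query $q'$ (e.g., all-zeros), sample $\sk' \gets \Gen(1^\secp)$ and $\ct_1' \gets \Enc(\sk', q')$, and set $\ct_2' := P^*(\ct_1')$. Define the (efficient) second-prover strategy $P_2^*(q_2) := P^*(\ct_1', \ct_2', q_2)$ and the (unbounded) first-prover strategy $P_1^*(q) := \arg\max_{a_1} \Pr_{q_2 \sim \cQ \mid q_1 = q}[\cV(q, q_2, a_1, P_2^*(q_2)) = 1]$. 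We allow $P_1^*$ to be inefficient because the classical value of a non-local game is the maximum over all functions.

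To prove $p \geq p^* - \negl$, I would reduce to the $T$-classical security of $\LQHE$ with $T(\secp) = 2^{|q_1| + |a_1|} \poly(\secp)$. Suppose for contradiction $p^* - p \geq 1/\poly$; build a non-uniform classical distinguisher $\cA$ as follows. $\cA$ hardwires, for each $q \in \{0,1\}^{|q_1|}$, a list of $N = \poly(\secp)$ i.i.d.\ samples from $\cQ \mid q_1 = q$ (the source of the $2^{|q_1|}$ factor in runtime). Upon submitting challenge messages $(m_0, m_1) = (q_1^\dagger, q')$ with $q_1^\dagger \gets \cQ_1$ and receiving $\ct^*$, it sets $\tilde P_2^*(\cdot) := P^*(\ct^*, P^*(\ct^*), \cdot)$ and computes
\[
\hat g \;:=\; \max_{a_1}\; \frac{1}{N} \sum_{i=1}^{N} \cV\!\big(q_1^\dagger,\, q_2^{q_1^\dagger,i},\, a_1,\, \tilde P_2^*(q_2^{q_1^\dagger,i})\big)
\]
by enumerating over all $2^{|a_1|}$ candidate answers (the source of the $2^{|a_1|}$ factor). $\cA$ outputs $1$ with probability $1 - \hat g$ and $0$ otherwise. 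By a Chernoff bound, $\hat g$ is $\negl$-close to its expectation $g$. When $b = 0$, $\tilde P_2^*$ coincides with $P^*$'s residual strategy on first-message $\Enc(\sk, q_1^\dagger)$, so $\E[g \mid b=0] \geq p^*$ because the $\arg\max$ dominates the specific point $a_1 = \Dec(\sk, P^*(\ct^*))$ that $P^*$ implicitly commits to; when $b = 1$, $\tilde P_2^* = P_2^*$ and $\E[g \mid b=1] = p$ by definition of $(P_1^*, P_2^*)$. Hence $\cA$'s distinguishing advantage is $\geq p^* - p - \negl \geq 1/\poly$, contradicting $T$-classical security.

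The main obstacle is verifying the inequality $\E[g \mid b = 0] \geq p^*$: once one recognizes that $g$ is a max over $a_1$ of a quantity whose value at the point $a_1 = \Dec(\sk, P^*(\ct^*))$ equals (up to Chernoff error) the conditional acceptance probability of $P^*$ in $\cT^\cG$, the bound is immediate — the real work is bookkeeping the distributions of $q_1^\dagger, q_2, \ct^*$ so that $\cA$'s randomness faithfully emulates the compiled protocol when $b = 0$. The $2^{|q_1|}$ blowup from non-uniform pre-sampling can be avoided when $\cQ \mid q_1$ is efficiently sampleable or when $\cG$ is a constant-size parallel repetition (\cref{subsec:poly-assumptions}), and the $2^{|a_1|}$ enumeration blowup is harmless for constant-size games such as CHSH. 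Combining $p \leq v$ with $p \geq p^* - \negl$ finally yields $p^* \leq v + \negl(\secp)$, as required.
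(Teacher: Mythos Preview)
Your proof is correct and follows essentially the same route as the paper: hardwire a dummy first-round ciphertext to define $P_2^*$, take $P_1^*$ to be the $\arg\max$ over $a_1$, and break $\LQHE$ by approximating that $\arg\max$ with $2^{|q_1|}$ blocks of hardwired conditional samples and a brute-force over the $2^{|a_1|}$ answers. The only differences are cosmetic---your distinguisher outputs a biased coin based on $\hat g$ whereas the paper runs two game instances and outputs the winning side---and one small bookkeeping point: since in your reduction $\ct^*$ is a \emph{fresh} encryption of $q'$ when $b=1$, the quantity $\E[g\mid b=1]$ equals the value of $(P_1^*,P_2^*)$ \emph{averaged over the random dummy ciphertext}, so your contradiction hypothesis ``$p^*-p\ge 1/\poly$'' should be read with $p$ denoting that average (after which an averaging argument yields a single good $\ct_1'$).
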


\begin{proof}
Fix any 2-player non-local game $\cG=(\cQ,\cV)$ with classical value~$v$ and quantum value $\v^*$, such that the value $\v^*$ is obtained by a prover strategy $(C^*_1,C^*_2)$ with the (joint) quantum state $\ket{\Psi}\in\cH_\cA\otimes\cH_\cB$, where $C^*_1\in \cC$.

\begin{enumerate}
    \item {\bf The quantum $\CS$ value of $\cT^{\cG}$ is at least $\v^*$.} Consider the following $\BQP$ prover $\cP^*$:
\begin{enumerate}
    \item In the first round, upon receiving $\hat{q}_1$, $\cP^*$ computes $\ct' \gets \Eval(\cdot_{\RegA},C^*_1,\hat{q}_1)$ on the register $\RegA$ of $\ket{\Psi}_{\RegA \RegB}$, and sends $\ct$. As internal state, $\cP^*$ retains the contents of register $\RegB$.
    \item In the second round, upon receiving $q_2$, $\cP^*$ uses its internal state $\brho_{\RegB}$ to compute and send $a'_2\gets C^*_2(\cdot_{\RegB},q_2)$. 
\end{enumerate}
We argue that 
$$
\Pr[\cV(q_1,q_2,a'_1,a'_2)=1]= \v^* - \negl(\secp),
$$
in the probability space where:
\begin{itemize}
    \item $(q_1,q_2)\gets \cQ$
    \item $\sk\gets \Gen(1^\secp)$ and $\hat{q}_1\gets \Enc(q_1,\sk)$
    \item $\ct' \gets \Eval(\cdot_{\RegA},C^*_1,\hat{q}_1)$ applied to $\ket{\Psi}_{\RegA \RegB}$ and $\brho$ set to the contents of $\RegB$.
    \item $a'_1=\Dec(\sk,\ct')$, and 
    \item $a'_2\gets C^*_2(\brho,q_2)$
\end{itemize}

By the aux-input correctness of $\LQHE$ (\cref{def:QHE-aux}) and the fact that $C^*_1\in\cC$, we see that for every $q_1$, the mixed state consisting of the distribution over $(a'_1, \brho)$ above is the same (up to negligible trace distance) as what would have been obtained by applying $C_1 \tensor \Id_{\RegB}$ to $(\ket{\Psi}_{\RegA \RegB}, q_1)$. By the contractivity of trace distance (with respect to the map defined by $C^*_2(\cdot_{\RegB}, q_2)$ and $\cV$), we conclude that
$$
\Pr[\cV(q_1,q_2,a'_1,a'_2)=1]=\Pr[\cV(q_1,q_2,a_1,a_2)=1] \pm \negl(\secp) = v^* \pm \negl(\secp),
$$
as desired.  

\item {\bf The classical $\CS$ value of $\cT^{\cG}$ is at most $\v+\negl(\secp)$.} 
Suppose for the sake of contradiction that the classical value of $\cT^\cG$ is $\v'=\v+\delta$ for a non-negligible $\delta=\delta(\secp)$. This implies that there exists a (deterministic) poly-size classical prover $\tilde{\cP}$ such that for every $\secp\in\mathbb{N}$,
$$
\Pr[(\tilde{\cP},\cV(\cT^{\cG}))(1^\secp)=1]=v',
$$
where $\cV(\cT^{\cG})$ denotes the verifier in the protocol $\cT^{\cG}$.
Next, for every $\secp\in\mathbb{N}$ we convert $\tilde{\cP}$ into (local) classical provers $(\cP_1,\cP_2)=(\cP_1(\secp),\cP_2(\secp))$ such that there exists a negligible function~$\mu$ such that for every $\secp\in\mathbb{N}$,
$$
\Pr[(\cP_1,\cP_2,\cV)=1]\geq \v' - \mu(\secp).
$$
Since $\v'-\mu(\lambda)>\v$ for sufficiently large $\secp$, this contradicts the fact that the classical value of $\cG$ is at most $\v$.

To that end, for every $i\in\{1,2\}$ we denote by $\cQ_i$ the residual distribution of $\cQ$ corresponding to player $\cP_i$.  Namely, $\cQ_i$ samples $(q_1,q_2)\gets \cQ$ and outputs $q_i$.  Similarly, we denote by $\cQ|q_1$ to be the distribution that samples $(q'_{1},q'_{2})\gets \cQ$ conditioned on $q'_{1}=q_1$, and outputs $q'_{2}$. 

We next define $(\cP_1,\cP_2)$:
\begin{enumerate}
\item\label{item:step1} Choose $q'_1\gets \cQ_1$ and generate  $\sk\gets\Gen(1^\secp)$. Let $\ct_1\gets \Enc(\sk,q'_1)$ and $\ct_2=\tilde{\cP}(\ct_1)$ (i.e., $\ct_2$ is the first message sent by $\tilde{\cP}$ upon receiving $\ct_1$ from the verifier).
\item $\cP_2$ has the ciphertexts~$(\ct_1,\ct_2)$  hardwired into it. On input~$q_2$, it simply emulates the response of $\tilde{\cP}$ given the first three messages $(\ct_1,\ct_2,q_2)$ to obtain~$a_2$.  It outputs $a_2$.

\item $\cP_1$ also has ~$(\ct_1,\ct_2)$  hardwired into it. On input~$q_1$, it computes and outputs~$a_1$ that maximizes the probability of the verifier accepting (w.r.t.\  $\cP_2$ defined above).
Namely, it outputs
$$
a_1={\sf argmax}_{a_1}\Pr_{q_2\leftarrow\cQ|q_1}[\cV(q_1,q_2,a_1,\cP_2(q_2))=1].
$$
\end{enumerate}


We next argue that there exists a negligible function~$\mu=\mu(\secp)$ such that for every $\secp\in\mathbb{N}$, 
$$
\Pr[(\cP_1(\secp),\cP_2(\secp),\cV)=1]\geq \v'-\mu(\secp),
$$
as desired.
To this end, suppose for the sake of contradiction that there exists a non-negligible $\epsilon=\epsilon(\secp)$ such that for every $\secp\in\mathbb{N}$,
\begin{equation}\label{eqn:cont}
\Pr[(\cP_1(\secp),\cP_2(\secp),\cV)=1]\leq \v'-\epsilon(\secp).
\end{equation}

\lisa{here}

We construct an adversary $\cA$ of size $2^{|q_1|+|a_1|}\cdot\poly(\secp/\epsilon)$ that breaks the semantic security of the underlying encryption scheme with advantage $\frac{\epsilon}{4}$.

The adversary~$\cA$ will use his challenge ciphertext~$\ct$ to define $\cP_1$ and $\cP_2$, and will use $\cP_1$ and $\cP_2$ in his attack.  Note that $\cP_2$ can be efficiently emulated in time $\poly(\secp)$ (assuming $\secp$ is larger than the communication complexity of $\cG$).  However, $\cP_1$ may not be efficient. In what follows we show that the maximization problem implicit in $\cP_1$ can be \emph{approximated} in (non-uniform) time $2^{|q_1|+|a_1|}\cdot\poly(\secp/\epsilon)$.  More specifically,  we show that there exists a function~$F$, that takes as input a ciphertext $\ct$ and a query $q_1\in {\sf Support}(\cQ_1)$, 
it runs in time  $2^{|a_1|+|q_1|}\cdot {\poly(\frac{\secp}{\epsilon})}$, and for every $q_1\in {\sf Support}(\cQ_1)$,
\begin{equation}\label{eqn:F}
\Pr_{q_2\gets\cQ|q_1}[\cV(q_1,q_2,\cP_1(q_1),\cP_2(q_2))=1]-\Pr_{q_2\gets\cQ|q_1}[\cV(q_1,q_2,F(\ct,q_1),\cP_2(q_2))=1]\leq \frac{\epsilon}{2},
\end{equation}
where $\cP_1,\cP_2$ are defined w.r.t.\ the ciphertext $\ct$.

In what follows, we describe $F$ as having randomized advice, but we will later set its advice to be ``the best possible", and thus obtain a deterministic function.  For every possible $q_1\in{\sf Support}(\cQ_1)$, we hardwire $N=\frac{9(\secp +  |a_1|)}{\epsilon^2}$ queries $q_{2,1},\ldots,q_{2,N}$ sampled independently from the distribution $\cQ|q_1$.

$F(\ct,q_1)$ is computed by approximating for every $a_1$ the probability 
$$
p_{q_1,a_1}=\Pr_{q_2\gets\cQ|q_1}[\cV(q_1,q_2,a_1,\cP_2(q_2))=1]
$$
by its empirical value
$$
p'_{q_1,a_1} = \frac{1}{N}
|\{i: \cV(q_1,q_{2,i},a_1,\cP_2(q_{2,i})=1\}|.
$$
It outputs $a_1$ with the maximal value of $p'_{q_1,a_1}$.

Note that (as a circuit) $F$ is of size $2^{|q_1|+|a_1|}\cdot\poly(\secp)$, since it has hardwired into it $N\cdot 2^{|q_1|}$ queries hardwired ($N$ for each possible $q_1$), and on input $(\ct,q_1)$ it runs in time $2^{|a_1|}\cdot N \cdot\poly(\secp)$.  Thus, its total size is as desired.

By a Chernoff bound,\footnote{\label{footnote:Chernoof}The form of Chernoff bound that we use here is that for $X_1,\ldots,X_N$ identically and independently distributed in $\{0,1\}$ with expectation $\mu$, it holds that $\Pr[|\frac{1}{N}\sum_{i=1}^N X_i -\mu|>\delta]\leq 2^{-2N\delta^2}$.  }
$$
\Pr[|p'_{q_1,a_1}-p_{q_1,a_1}|>\frac{\epsilon}{3}]\leq  2^{-2N({\epsilon}/{3})^2}=2^{-\secp} \cdot 2^{-|a_1|}
$$
From the equation above (and applying a union bound over all $a_1$) indeed the difference between the two probabilities in Equation~\eqref{eqn:F} is at most $\frac{\epsilon}{3}+2^{-\secp}\leq \frac{\epsilon}{2}$, as desired.

We are now ready to define our adversary $\cA$ that will use $(F, P_2)$ to break semantic security.  Specifically, $\cA$ takes as input a tuple $(q_{1,0},q_{2,0},q_{1,1},q_{2,1},\ct)$, where $(q_{1,0},q_{2,0}),(q_{1,1},q_{2,1})\gets\cQ$, and $\ct$ is distributed by choosing $\sk\gets\Gen(1^\secp)$ and $b^*\gets \{0,1\}$, and sampling $\ct\gets \Enc(\sk,q_{1,b^*})$.  
It guesses $b^*$ as follows:

\begin{enumerate}
\item For every $b\in\{0,1\}$, run $F(\ct, q_{1,b})$ in time $2^{|q_1|+|a_1|}\cdot\poly(\secp/\epsilon)$ and compute $a_{1,b}=F(\ct,q_{1,b})$. 
\item If there exists $b\in\{0,1\}$ such that 
$$
\cV(q_{1,b},q_{2,b},a_{1,b},\cP_2(q_{2,b}))=1~~\wedge~~\cV(q_{1,1-b},q_{2,1-b},a_{1,1-b},\cP_2(q_{2,1-b}))=0
$$ 
then output $b$.  Else output a random $b\gets \{0,1\}$.
\end{enumerate}  
Note that by definition of $\cP^*$ and $(\cP_1,\cP_2)$, it holds that for $b=b^*$,
$$
\Pr_{(q_{1,b},q_{2,b})\leftarrow \cQ}[\cV(q_{1,b},q_{2,b},\cP_1(q_{b}),\cP_2(q_{2,b}))=1]\geq v'.
$$
Thus, by Equation~\eqref{eqn:F}, it holds that for $b=b^*$,
\begin{equation}\label{eqn:v'} 
 \Pr_{(q_{1,b},q_{2,b})\leftarrow \cQ}[\cV(q_{1,b},q_{2,b},F(\ct,q_{1,b}),\cP_2(q_{2,b}))=1]\geq v'-\frac{\epsilon}{2}.
\end{equation}
On the other hand, by our contradiction assumption (Equation~\eqref{eqn:cont}) it holds that for $b=1-b^*$, 
\begin{equation}\label{eqn:v'-eps}
 \Pr_{(q_{1,b},q_{2,b})\leftarrow \cQ}[\cV(q_{1,b},q_{2,b},F(\ct,q_{1,b}),\cP_2(q_{2,b}))=1]\leq v'-\epsilon.
\end{equation}



Denote by $E_\good$ the event that
$$\cV(q_{1,b},q_{2,b},F(\ct,q_{1,b}),\cP_2(q_{2,b}))=1~~~~\mbox{ for }~~~b=b^*
$$
and 
$$\cV(q_{1,b},q_{2,b},F(\ct,q_{1,b}),\cP_2(q_{2,b}))=0~~~~\mbox{ for }~~~b=1-b^*
$$ 
Similarly, denote by $E_\bad$ the event that 
$$\cV(q_{1,b},q_{2,b},F(\ct,q_{1,b}),\cP_2(q_{2,b}))=0~~~~\mbox{ for }~~~b=b^*
$$
and 
$$\cV(q_{1,b},q_{2,b},F(\ct,q_{1,b}),\cP_2(q_{2,b}))=1~~~~\mbox{ for }~~~b=1-b^* 
$$
Denote by $E$  the event that 
$$\cV(q_{1,b},q_{2,b},F(\ct,q_{1,b}),\cP_2(q_{2,b}))=1~~~~\forall b\in\{0,1\}
$$
Then by Equation~\eqref{eqn:v'},
$$
\Pr[E_\good]+\Pr[E]
\geq v'-\frac{\epsilon}{2},
$$
and by Equation~\eqref{eqn:v'-eps},
$$
\Pr[E_\bad]+\Pr[E]\leq v'-\epsilon,
$$
which together imply that 
\begin{equation}\label{eqn:eps}
\Pr[E_\good]-\Pr[E_\bad]\geq\frac{\epsilon}{2}.
\end{equation}
Thus,
 \begin{align*}
&\Pr[b=b^*]\geq\\
&\frac{1}{2}\cdot (1-\Pr[E_\good\cup E_\bad])+\Pr[E_\good]\geq\\
&\frac{1}{2}+\frac{1}{2}(\Pr[E_\good]-\Pr[E_\bad])=\\
&\frac{1}{2}+\frac{\epsilon}{4},
\end{align*}
as desired, where the first equation follows from the definition of $E_\good$ and $E_\bad$ and the definition of $\cA$, the second equation follows from the union bound, the third equation follows from Equation~\eqref{eqn:eps}. This contradicts the security of $\LQHE$; thus, we conclude the desired bound on the classical $\mathsf{CS}$ value of $\cT^{\cG}$. \qedhere

\end{enumerate}
\end{proof}


\subsection{Extension to $k$-Player Games}\label{sec:k-provers}
 In this section, we generalize \cref{thm:main} to $k$-player games for $k>2$. We begin with a construction that is a $2k$-round analogue of the transformation $\cT$ from \cref{thm:main}: given any $k$-player non-local game $\cG$, we define the following interactive game $\cT^{\cG}$:
 
 \begin{enumerate}
     \item The verifier samples $(q_1, \hdots, q_k) \gets \cQ$, $\sk_1, \hdots, \sk_{k-1} \gets \Gen(1^\secp)$, and $\hat q_i \gets \Enc(\sk_i, q_i)$ for each $1\leq i \leq k-1$. 
     \item For each $1\leq i\leq k-1$, in round $2i-1$ the verifier sends $\hat q_i$. In round $2i$ the prover responds with a ciphertext $\hat a_i$.
     \item In round $2k-1$ the verifier sends $q_k$; in round $2k$ the prover responds with some string $a_k$.
     \item The verifier decrypts each $\hat a_i$ with $\sk_i$ and accepts if and only if the transcript $(q_1, a_1, \hdots, q_k, a_k)$ is accepting according to $\cG$.
 \end{enumerate}
 
 We prove the following theorem.
 
 \begin{theorem}\label{thm:main-k-prover}
Fix any $\LQHE$ scheme (satisfying correctness with respect to auxiliary inputs) for a circuit class $\cC$, and any $k$-player non-local game $\cG=(\cQ,\cV)$ with classical value $v$ and quantum value $\v^*$, such that the value $\v^*$ is obtained by a prover strategy $(C^*_1,\hdots, C^*_k)$ with a quantum state $\ket{\Psi}\in\cH_1\tensor \hdots \tensor \cH_k$ with each $C^*_i\in \cC$ (except possibly $C^*_k$). Denote by $|q_i|$ and $|a_i|$ the lengths of the query and answer of $P_i$, respectively.  If the underlying $\LQHE$ encryption scheme is $T$-secure, for  $T(\secp)=2^{\sum_{i=1}^{k-1} (|q_i|+|a_i|)}\cdot\poly(\secp)$, then the following holds:
\begin{enumerate}
    \item The quantum $\CS$ value of $\cT^{\cG}$ is at least $\v^*$. 
    \item The classical $\CS$ value of $\cT^{\cG}$ is at most $\v+\negl(\secp)$.
\end{enumerate}
\end{theorem}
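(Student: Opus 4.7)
For part~1, the honest quantum prover $\cP^*$ holds the $k$-partite state $\ket{\Psi}\in\RegH_1\tensor\cdots\tensor\RegH_k$ witnessing value $v^*$. In round $2i-1$ for $i<k$, upon receiving $\hat q_i$, $\cP^*$ runs $\Eval(\cdot_{\RegH_i},C_i^*,\hat q_i)$ on the $i$-th tensor factor and returns the resulting ciphertext $\ct_i'$ in round $2i$. In round $2k-1$, after receiving $q_k$ in the clear, it applies $C_k^*$ to $(\RegH_k,q_k)$ and outputs $a_k$. The plan is to apply the aux-input correctness of $\LQHE$ (\cref{def:QHE-aux}) inductively for $i=1,\ldots,k-1$, each time treating the unused tensor factors $\RegH_{i+1}\tensor\cdots\tensor\RegH_k$ as the auxiliary register $\RegB$. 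After $k-1$ applications, and by contractivity of trace distance under the final $C_k^*$ and $\cV$ steps, the claimed acceptance probability $v^*-\negl(\secp)$ follows.

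\textbf{Soundness construction.} For part~2, fix any classical $\PPT$ prover $\tilde{\cP}$ achieving value $v'=v+\delta$ in $\cT^{\cG}$ and suppose for contradiction that every tuple of local provers achieves at most $v'-\epsilon$ in $\cG$ for some non-negligible $\epsilon$. I would construct local provers $(\cP_1,\ldots,\cP_k)$ sharing the following hardwired ``fake transcript'': sample $q'_1,\ldots,q'_{k-1}$ from the marginals of $\cQ$, generate independent keys $\sk_i$, set $\ct_i=\Enc(\sk_i,q'_i)$, and iteratively compute $\hat a_i=\tilde{\cP}(\ct_1,\hat a_1,\ldots,\ct_i)$ with $a^\dagger_i=\Dec(\sk_i,\hat a_i)$. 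Define $\cP_k(q_k)=\tilde{\cP}(\ct_1,\hat a_1,\ldots,\ct_{k-1},\hat a_{k-1},q_k)$, and for $i<k$ define by backward induction on $i$,
\[\cP_i(q_i)=\arg\max_{a_i}\Pr_{\vec q_{>i}\sim\cQ\mid q'_{<i},q_i}\bigl[\cV(q'_1,\ldots,q'_{i-1},q_i,\vec q_{>i},a^\dagger_1,\ldots,a^\dagger_{i-1},a_i,\cP_{i+1}(q_{i+1}),\ldots,\cP_k(q_k))=1\bigr].\]

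\textbf{Soundness analysis.} I would derive a contradiction via a $(k-1)$-step hybrid argument, where hybrid $j\in\{0,\ldots,k-1\}$ uses real ciphertexts $\Enc(\sk_i,q_i)$ (matching the actual non-local-game queries $q_i$) for $i\le j$ and fake ones for $i>j$, redefining the $\cP_i$'s with respect to those ciphertexts. In hybrid $k-1$ (all real), the candidate $a_i=a^\dagger_i$ replicates $\tilde{\cP}$'s single-prover output, so each $\cP_i$'s argmax achieves at least the single-prover value $v'$; in hybrid~$0$ (all fake, matching the true local-prover game), the value is at most $v'-\epsilon$ by assumption. Some consecutive pair of hybrids therefore differs by $\ge\epsilon/(k-1)$, which can be exploited as a $T$-security attack on $\LQHE$ directly analogous to the adversary $\cA$ in the proof of \cref{thm:main}. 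The hardest part will be implementing the attack efficiently: it requires approximating each $\cP_i$'s argmax by empirical Chernoff-bound sampling, which yields a $2^{|a_i|}$ blowup (enumerating candidate $a_i$'s) and a $2^{|q_i|}$ blowup (hardwiring independent samples of $\vec q_{>i}$ per value of $q_i$); composed across the $k-1$ nested maximizations, this produces total adversary size $2^{\sum_{i=1}^{k-1}(|q_i|+|a_i|)}\cdot\poly(\secp/\epsilon)$, exactly matching the hypothesized $T$-security level. A secondary subtlety is verifying that the local provers' definitions interact cleanly with the hybrids (each hybrid perturbs the ciphertexts, which in turn redefines the $\cP_i$'s via backward induction), but this follows from a careful inspection analogous to the one in the 2-prover case, together with a standard averaging to absorb the $1/(k-1)$ factor into the noticeable advantage.
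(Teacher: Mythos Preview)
Your completeness argument is fine and matches the paper. The soundness construction, however, contains a genuine gap.

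You define the local provers by \emph{backward} induction: $\cP_i$'s objective fixes the earlier coordinates to the hardcoded values $q'_{<i}$ and $a^\dagger_{<i}=\Dec(\sk_{<i},\hat a_{<i})$, and uses the already-defined later provers $\cP_{i+1},\ldots,\cP_k$. This creates a dependency structure that blocks the security reduction. In the step from hybrid $j-1$ to hybrid $j$ you must argue indistinguishability of $\ct_j\gets\Enc(\sk_j,q'_j)$ versus $\ct_j\gets\Enc(\sk_j,q_j)$, so the reduction receives $\ct_j$ as its challenge and does \emph{not} have $\sk_j$. But to evaluate the hybrid value the reduction must compute $\cP_{j+1},\ldots,\cP_k$, whose very definitions require both the plaintext $q'_j$ (it appears in the conditioning $\cQ\mid q'_{<i},q_i$ and in $\cV(q'_1,\ldots)$) and $a_j^\dagger=\Dec(\sk_j,\hat a_j)$. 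Neither is available to the adversary. Splitting into sub-hybrids (first swap the ciphertext, then swap the explicit $q'_j$) does not help, since even with $q'_j$ held fixed the reduction still needs $\sk_j$ to form $a_j^\dagger$. The remark that this ``follows from a careful inspection analogous to the 2-prover case'' does not apply: when $k=2$ there is no coordinate $i>j$ using $a_j^\dagger$, so the obstruction simply does not arise.

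The paper's construction runs the induction in the \emph{forward} direction precisely to avoid this: $P^*_\ell$ optimizes using the already-defined $P^*_{<\ell}$ on the earlier coordinates and, for the later coordinates, runs $\widetilde P_{\ell+1,\ldots,k}$ on \emph{fresh} encryptions under \emph{fresh} keys that the reduction samples itself (so it can decrypt). Consequently $P^*_1,\ldots,P^*_{j-1}$ depend only on $\ct'_1,\ldots,\ct'_{j-1}$ and never on $\ct'_j$ or $\sk_j$, and no decryption under the challenge key is ever needed. This asymmetry between forward and backward induction is not cosmetic; it is exactly what makes the hybrid argument implementable by a $T(\secp)$-size adversary.
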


\begin{proof}
We briefly sketch the quantum $\mathsf{CS}$ value of $\cT^{\cG}$. Given a $k$-tuple of entangled provers $\cP_1, \hdots \cP_k$ (with shared state $\ket{\Psi}_{\RegA_1, \hdots, \RegA_k}$), we define the following prover $\cP$ for the interactive game:

\begin{itemize}
    \item $\cP$ initially has internal state $\ket{\Psi}_{\RegA_1, \hdots, \RegA_k}$.
    \item Given $\hat q_i$ (for each $1\leq i\leq k-1$), $\cP$ homomorphically evaluates the circuit defining $\cP_i$ on register $\RegA_i$ and $\hat q_i$ (tracing out any ancilla registers). $\cP$ sends the encrypted answer $\hat a_i$ to the verifier.
    \item Given $q_k$, $\cP$ evaluates the circuit defining $\cP_k$ on $\RegA_k$ (and $q_k$), and sends the answer $a_k$ to the verifier.
\end{itemize}
Analogously to \cref{thm:main}, the aux-input correctness of $\LQHE$ implies that the verifier will accept with probability $v^*(\cP_1, \hdots, \cP_k) \pm \negl(\secp)$, where $v^*(\cP_1, \hdots, \cP_k, \ket{\Psi})$ denotes the value of strategy $(\cP_1, \hdots, \cP_k, \ket{\Psi})$. In more detail, we invoke aux-input correctness and the contractivity of trace distance $k-1$ times sequentially (starting with auxiliary registers $(\RegA_2, \hdots, \RegA_k)$ and removing one $\RegA_i$ each time).

We now bound the classical value of $\cT^\cG$ via the following argument. Suppose that a (computationally bounded) classical interactive prover $\widetilde{\cP}$ (deterministic without loss of generality) has value $v'$ in $\cT^{\cG}$. We will construct local provers $(P^*_1, \hdots, P^*_k)$ winning $\cG$ with probability at least $v' - \negl(\secp)$.

To this end, we sample $(q_1', \hdots, q_k') \gets \cQ$,  secret keys $\sk'_1, \hdots, \sk'_{k-1}\gets\Gen(1^\secp)$ and $\ct'_i \gets \Enc(\sk'_i, q'_i)$ for $1\leq i \leq k-1$. Each prover $P^*_i$ has $\ct'_1,\hdots,\ct'_{i}$ hardwired into its description. 
The prover $P^*_k$ simply emulates the last message function of $\widetilde{P}$; namely, upon receiving $q_k$ it emulates $\widetilde{P}$ assuming that the first $k-1$ messages from the verifier were $\ct'_1,\ldots,\ct'_{k-1}$.
 We next define $P^*_1,\hdots,P^*_{k-1}$ recursively starting with $P^*_1$.  
 
 Assuming we have already defined $P^*_1,\ldots,P^*_{\ell-1}$ (this includes the base case $\ell = 1$), we  define $P^*_{\ell}$ and an {\em interactive} prover $\widetilde{P}_{\ell+1, \hdots, k}$ that has $\ct'_1,\hdots,\ct'_{\ell}$ hardwired to its description, and is sequentially given $\hat q_{\ell+1}, \hdots, \hat q_{k-1}, q_k$ as inputs and returns $\hat a_{\ell+1}, \hdots, \hat a_{k-1}, a_k$ as outputs.



\begin{itemize}
    \item $\widetilde{P}_{\ell+1, \hdots, k}$ simply emulates $\widetilde{P}$ using hard-coded $\ct'_1, \hdots, \ct'_{\ell}$. 
    \item $P^*_{\ell}$  is given as input $q_{\ell}$ and outputs an optimum of the following maximization problem:
    \[a_{\ell}^*=\arg\max_{a_{\ell}}\Pr_{\{q_j\}_{j\neq \ell}\leftarrow\cQ|q_{\ell}}[\cV(q_1,a_1, \hdots q_k, a_k)=1],
    \]
    where $a_j = P^*_j(q_j)$ for all $j < \ell$, and for all $j > \ell$, $a_j$ is obtained by running $\widetilde{P}_{\ell+1, \hdots, k}$ on encryptions (under fresh secret keys) of $q_{\ell+1}, \hdots, q_k$ and then (unless $j=k$) decrypting the resulting answers.
\end{itemize}

Note that by construction, $P^*_k=\widetilde{P}_{k}$, and $P^*_1,\ldots,P^*_k$ are indeed local.
\yael{fix}
Moreover, just as in the proof of \cref{thm:main}, we can \emph{approximately} solve the maximization problems defined in $P^*_1,\ldots P^*_{k-1}$ with functions $F_1, \hdots, F_{k-1}$ that can be implemented in time $2^{\sum_{i=1}^{k-1} |q_i| + |a_i|}\cdot \poly(\secp)$. This is done, given an inverse polynomial error $\epsilon$, by hard-coding for each $q_i$, $N = \frac{18k^2(\secp + |a_i|)}{\epsilon^2}$ samples $\{q_j^{(\ell)}\}_{j\neq k}$ (for $1\leq \ell \leq N$) from $\cQ|_{q_i}$, and will result in provers $F_1, \hdots, F_{k-1}, P^*_k$ that attain value matching $P^*_1, \hdots, P^*_k$ up to error $\epsilon/4$.

Thus, to complete the proof of  \cref{thm:main-k-prover} it remains to prove the following claim.

\begin{claim} The tuple $(P^*_1, \hdots, P^*_k)$ has success probability at least $v' - \negl(\secp)$.
\end{claim}
\begin{proof}
Assume that $(P^*_1, \hdots, P^*_k)$ has success probability at most $v' - \epsilon$ for some non-negligible $\epsilon$. We first replace $P^*_i$ by $F_i$ defined above, and obtain that $(F_1, \hdots, F_k)$ has success probability at most $v' - 3\epsilon/4$. We now derive a contradiction by a hybrid argument. Specifically, for every $j$, we define the quantity 
\[\mathsf{Hyb}_j = \underset{\substack{q_1, \hdots, q_k \gets \cQ \\ \text{ for } i \leq j: \hspace{.1cm} a_i = F_i(q_i) \\ \text{ for } i > j: \hspace{.1cm} a_i = \Dec(\hat a_i), \hspace{.1cm} \hat a_i \text{ output by } \widetilde{P}_{j+1, \hdots, k}} }{\Pr}\left[ \mathcal V(q_1, a_1, \hdots, q_k, a_k) = 1  \right]
\]
Note that $\mathsf{Hyb}_0$ is equal to the success probaiblity of $\widetilde{P}$, which is equal to $v'$ by assumption, while $\mathsf{Hyb}_{k-1}$ is equal to the value of $(F_1, \hdots, F_k)$. 

We now claim that $\mathsf{Hyb}_j > \mathsf{Hyb}_{j-1} - \frac{\epsilon}{4k} - \negl(\secp)$ for every $j$. To prove this, we will reduce from the security of $\LQHE$ with respect to ciphertext $\ct'_j$; note that $F_1, \hdots, F_{j-1}$ do not depend on $\ct'_j$. Ciphertexts $\ct'_1, \hdots, \ct'_{j-1}$ will remain fixed for this entire argument, while $\ct'_{j+1}, \hdots, \ct'_{k-1}$ are not used by any algorithms in $\mathsf{Hyb}_{j-1}$ or $\mathsf{Hyb}_j$.

Define the auxiliary quantity $\Hyb'_j$ to be the same as $\Hyb_j$, except that $\ct'_j$ is sampled as $\Enc(\sk'_j, q_j)$, where $q_j$ is the input sent to $F_j$ in the experiment. Note that $\Hyb'_j > \Hyb_{j-1} - \frac{\epsilon}{4k}$, because the particular choice of $a_j^* = \Dec(\widetilde{P}_{j, \hdots, k}(\ct'_j))$ in the maximization problem defining $P^*_j$ would have value $\Hyb_{j-1}$ (as this strategy matches the value of $(F_1, \hdots, F_{j-1}, \widetilde{P}_{j, \hdots, k})$), and $F_j$ approximates the $P^*_j$ maximization problem up to error $\frac{\epsilon}{4k}$.

Moreover, it holds that $\Hyb'_j - \Hyb_{j-1} = \negl(\secp)$, or this would result in an efficient test distinguishing encryptions of $q_j$ vs. encryptions of $q_j'$ (by an analogous reduction as in the proof of \cref{thm:main}). 

Thus, by a hybrid argument,\footnote{As discussed in \cite{ePrint:FisMit21} (although context differs slightly here), a hybrid argument can be applied because the collection of indistinguishability claims $\Hyb'_j \approx \Hyb_{j-1}$ are proved via a universal reduction $R$ from the security of $\LQHE$.} we conclude that $\Hyb_{k-1} > \Hyb_0 - \frac{\epsilon}4- \negl(\secp) = v' - \frac{\epsilon}4- \negl(\secp)$, contradicting our initial assumption. This completes the proof of the claim.
\end{proof}
\noindent This completes the proof of \cref{thm:main-k-prover}.
\end{proof}

\subsection{Achieving $1-\negl(\secp)$ Quantum-Classical Gap}
\cref{thm:main,thm:main-k-prover} show how to convert a $k$-prover non-local game $\cG$ into a $2k$-round interactive argument for quantum advantage. The simplest instantiation of this paradigm is to compile a constant-size non-local game $\cG_0$ (such as CHSH, Magic Square, Odd-Cycle Test, GHZ), which will result in a constant-round protocol with constant gap between the quantum and classical values of the game.

We now discuss various methods of obtaining \emph{optimal} quantum-classical gap: namely, quantum value $1-\negl(\secp)$ and classical value $\negl(\secp)$.

\begin{enumerate}
    \item \textbf{Sequential Repetition.} The simplest method is repeating $\cT(\cG_0)$ $\secp$ times in sequence. If the quantum value of $\cT(\cG_0)$ (equivalently, the quantum value of $\cG_0$) is $1-\negl(\secp)$, then this will also hold for a sequential repetition, while if the classical value of $\cT(\cG_0)$ is bounded away from $1$, then the sequentially repeated game's classical value will be $\negl(\secp)$ by a standard argument.
    
    In the case where the quantum value of $\cG_0$ is less than $1$, one can apply \textbf{threshold sequential repetition}, in which the verifier for the repeated game accepts if at least a $\theta$-fraction of the copies of $\cT(\cG_0)$ would be accepted. Choosing $\theta$ strictly between the classical and quantum values of $\cG$ will ``polarize'' the classical and quantum values to $(\negl(\secp), 1-\negl(\secp))$.
    \item \textbf{Random-Terminating Parallel Repetition.} Option 1 increases the round complexity of the interactive game to $O(\secp)$, which is somewhat undesirable. Ideally, it would be possible to execute $\secp$ \emph{parallel} copies of $\cT(\cG_0)$ (preserving the round complexity), but it is known that parallel repetition fails to amplify the (classical) computational values of certain interactive games \cite{FOCS:BelImpNao97}. 
    
    However, by appealing to the (slightly more complex operation of) \emph{random-terminating} parallel repetition \cite{FOCS:Haitner09}, we can again ``polarize'' the classical and quantum values of $\cT(\cG_0)$ (while preserving the round complexity). Again, we consider a threshold variant of random-terminating parallel repetition in which the verifier accepts based on a threshold $\theta$ strictly between the quantum and classical values of $\cG_0$. \cite{FOCS:Haitner09} implies that the classical value of this repeated game will be $\negl(\secp)$ (using $\secp$ repetitions), while the quantum value of this game will be $1-\negl(\secp)$.
    \item \textbf{Plain Parallel Repetition?} Finally, we revisit the question of whether plain (threshold) parallel repetition suffices -- this would result in a simpler protocol as compared to Option 2. A lower bound on the quantum value is immediate; what is unclear is a negligible upper bound on the classical value. However, we observe that (threshold) parallel repetitions of $\cT(\cG_0)$ can be analyzed by appealing to \cref{thm:main,thm:main-k-prover} with respect to (threshold) parallel repetitions of the \emph{non-local game} $\cG_0$. 
    
    In particular, under a sufficiently strong assumption on the $\LQHE$, the classical value of this repeated game is at most the classical value of the (threshold) parallel repetition of $\cG_0$ as a non-local game! Thus, if sufficient (threshold) parallel repetition of $\cG_0$ results in a game with negligible classical value, then under a sub-exponential hardness assumption, the parallel repeated interactive game has $\negl(\secp)$ classical value. 
    
    Since, for example, such parallel repetition theorems are known for $2$-player games \cite{STOC:Raz95,STOC:Rao08}, this resolves the question for $2$-player games under a subexponential assumption. We elaborate below on how to avoid this subexponential loss for games $\cG_0$ that exhibit \emph{exponential} hardness amplification under parallel repetition.
\end{enumerate}

\subsubsection{Avoiding Sub-exponential Loss for Games with Strong Parallel Repetition}\label{subsec:poly-assumptions}

We briefly recall the formal definition of \emph{(threshold) parallel repeated games} $\cG = \cG_0^{t, \theta}$:

\begin{definition}[Threshold Parallel Repetition]\label{def:threshold-repeat}
  Let $\cG_0$ denote a constant-size $k$-prover non-local game. We define the $t$-fold $\theta$-threshold repeated game $\cG = \cG_0^{t, \theta}$ as follows:
  
  \begin{itemize}
      \item Queries are of the form $(\tilde q_1 = (q_{1,1}, \hdots, q_{1, t}), \hdots, \tilde q_k = (q_{k,1}, \hdots, q_{k, t}))$. The $k$-tuples $(q_{1, i}, \hdots, q_{k, i})$ are sampled i.i.d.
      \item Answers have the form $(\tilde a_1 = (a_{1,1}, \hdots, a_{1, t}), \hdots, \tilde a_k = (a_{k, 1}, \hdots, a_{k,t}))$.
      \item The repeated verifier $\cV$ accepts if at least $\theta$ fraction of the transcripts $(q_{1,i}, a_{1,i}, \hdots, q_{k,i}, a_{k,i})$ are accepted by the $\cG_0$-verifier $\cV_0$. 
  \end{itemize}
\end{definition}
  
\begin{remark}\label{remark:quantum-value-rep} We remark that if the quantum value of $\cG_0$ is at least $v^* > \theta + \epsilon$, then by a Chernoff bound (see Footnote~\ref{footnote:Chernoof}). the quantum value of the repeated game $\cG$ is at least
  \[ 1 - 2^{2t \epsilon^2}.
  \]  
  On the other hand, \textbf{\emph{if $k=2$ and}} the classical value of $\cG_0$ is \emph{at most} $v < \theta - \epsilon$, then by \cite{STOC:Rao08}, the classical value of $\cG$ is at most
 \[ 2^{-\gamma \epsilon^3 t / |a_1|}, 
 \]
 where $\gamma$ is a constant that can depend on $\theta$. Thus, for large enough $t = O(\secp)$, $\cG$ will have quantum value $\geq 1-2^{-\secp}$ and classical value $\leq 2^{-\secp}$ in the case $k=2$.
 
 For general $k$, the status of parallel repeated games is considerably less well understood \cite{Verbitsky96,arxiv:DHVY16,arxiv:HolRaz20}. 
 \end{remark}
 
 In this section, we prove a strengthening of \cref{thm:main} for all games $\cG_0$ with strong enough parallel repetition properties:
 
 \begin{theorem}\label{thm:poly-hardness} Let $\cG_0$ be a $k$-player non-local game of constant size; i.e., the lengths of queries and answers in $\cG_0$ is $O(1)$. Let $v$ be the classical value of $\cG_)$ and let $v^*$ be its quantum value, and suppose that $v^*>v$. 
   Let $\cG_0^{t, \theta}$ denote the threshold parallel repetition of~$\cG_0$ (\Cref{def:threshold-repeat}), and let $\theta = \frac{v + v^*}2$.   
   
   \emph{Assume that for all $t$}, $v(\cG_0^{t, \theta}) = 2^{-c t}$ for a fixed constant $c$ (that can depend on $\cG_0$).
   
   Then, given a \emph{polynomially secure} QHE scheme $\LQHE$, the game $\cG = \cG_0^{\secp, \theta}$ can be converted into a single-prover game that has (computational) quantum value $1 - \negl(\secp)$ and (computational) classical value $\negl(\secp)$.
 \end{theorem}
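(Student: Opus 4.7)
The plan is to set the single-prover game to be $\cT^\cG$, where $\cG = \cG_0^{\secp, \theta}$ and $\cT$ is the compiler from \cref{thm:main}. Both the quantum completeness and classical soundness properties will follow from applying \cref{thm:main} to $\cG$ combined with the parallel repetition guarantees for $\cG_0$; the key point is that a naive application of \cref{thm:main} would require subexponential QHE security, and the task is to modify the reduction so that polynomial security suffices.

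For completeness, the quantum value of $\cG$ satisfies $v^*(\cG) \geq 1 - 2^{-\Omega(\secp)}$ by a Chernoff bound on $\secp$ independent copies of the quantum strategy for $\cG_0$ (as in \cref{remark:quantum-value-rep}), using that $v^*(\cG_0) > \theta$ by a constant gap. \cref{thm:main} then immediately gives quantum $\CS$ value at least $v^*(\cG) - \negl(\secp) = 1 - \negl(\secp)$.

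For soundness, my plan is to revisit the reduction in the proof of \cref{thm:main} and adapt it so that the constructed QHE adversary runs in $\poly(\secp)$ time, matching polynomial QHE security. The two sources of the $2^{|q_1|+|a_1|}$ blowup can each be addressed for parallel-repetition games: (i) the hardcoded samples of $\tilde{q}_2 \mid \tilde{q}_1$ (which yielded the $2^{|q_1|}$ factor) are unnecessary because $\tilde{q}_2 \mid \tilde{q}_1$ factors as $\secp$ independent samples from the constant-size marginal $\cQ_0 \mid q_{1,i}$, so they can be generated on the fly; and (ii) the exhaustive $\arg\max$ over $\tilde{a}_1 \in \zo^{O(\secp)}$ (which yielded the $2^{|a_1|}$ factor) can be replaced by a coordinate-wise maximization: for each coordinate $i$, pick $a_{1,i}$ maximizing the empirical per-coordinate acceptance probability, which runs in $\poly(\secp)$ time.

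With these modifications in place, granting the analog of the key inequalities from the proof of \cref{thm:main} (see the obstacle below), I obtain, for any classical cheating prover $\tilde{\cP}$ with value $v'$ in $\cT^\cG$, a pair of local provers for $\cG$ achieving value at least $v' - \negl(\secp)$. Combined with the strong parallel repetition assumption $v(\cG) = v(\cG_0^{\secp, \theta}) = 2^{-c\secp}$, this gives
\[v' \;\leq\; v(\cG) + \negl(\secp) \;=\; \negl(\secp),\]
establishing the desired classical $\CS$ value bound.

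The main obstacle is justifying that the coordinate-wise $\arg\max$ strategy used in the modified reduction still achieves value $\geq v' - \negl(\secp)$ in $\cG$: a priori, coordinate-wise optimization need not maximize the probability of passing a global threshold. I anticipate handling this by using semantic security to argue that the per-coordinate acceptance probabilities achieved by the coordinate-wise strategy dominate those arising from the ``honest decoding'' $\Dec(\sk, \tilde{\cP}(\ct_1))$ of $\tilde{\cP}$'s own answer, and then applying a Chernoff-style concentration bound to lower-bound the threshold-acceptance probability of the coordinate-wise strategy by $v' - \negl(\secp)$.
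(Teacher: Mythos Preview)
Your completeness argument is fine and matches the paper. The soundness argument, however, takes a different route from the paper and runs into the very obstacle you flag, which your proposed fix does not resolve.

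The difficulty is this: once you fix $\ct_1$ and define $\cP_2(\tilde q_2) = \tilde{\cP}(\ct_1,\tilde{\cP}(\ct_1),\tilde q_2)$, the per-coordinate acceptance indicators
\[
X_i(\tilde q_2) \;=\; \cV_0\bigl(q_{1,i},\, q_{2,i},\, a_{1,i},\, \cP_2(\tilde q_2)_i\bigr)
\]
are \emph{not} independent across $i$, because $\cP_2(\tilde q_2)_i$ can depend arbitrarily on the entire vector $\tilde q_2$. Your coordinate-wise $\arg\max$ guarantees $\E[X_i^{*}] \ge \E[X_i^{\mathrm{hon}}]$ for each $i$, but higher marginals do \emph{not} imply a higher probability of exceeding a threshold when the variables are correlated. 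A Chernoff bound is simply not available here: $\tilde{\cP}$ is adversarial and can induce whatever correlation structure it likes among the $X_i^{*}$'s. Concretely, nothing prevents a situation where the ``honest'' choice $\tilde a_1 = \Dec(\sk,\tilde{\cP}(\ct_1))$ makes the $X_i^{\mathrm{hon}}$'s highly positively correlated (so the threshold is crossed with probability $v'$), while your coordinate-wise optima $a_{1,i}^{*}$ each raise the marginal slightly but destroy the correlation, driving the threshold probability well below $v'$. Your semantic-security step only controls the marginals, not the joint distribution, so the Chernoff patch does not close the gap.

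The paper's argument sidesteps this issue entirely. Rather than modifying the internals of the \cref{thm:main} reduction, it uses the strong parallel-repetition hypothesis to pass to a \emph{logarithmic-size} subgame: choose $t' = O(\log \secp)$ so that $v(\cG_0^{t',\theta}) < \delta/2$, then by averaging fix all but $t'$ of the coordinates (and their encrypted queries, which requires fresh keys per slot) to obtain a prover for the compiled $t'$-fold game with value $\ge \delta$. Now the original \cref{thm:main-k-prover} applies directly, and since the query/answer lengths in $\cG_0^{t',\theta}$ are $O(t') = O(\log \secp)$, the required security level $2^{|q_1|+|a_1|}\cdot\poly(\secp)$ is polynomial. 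This avoids any need to efficiently approximate the global $\arg\max$.
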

 For example, using \cref{thm:poly-hardness}, we can compile arbitrary parallel repeated (constant-size) $2$-prover games under polynomial hardness assumptions.
 
 \begin{proof}
   We consider the single prover protocol defined in \cref{thm:main-k-prover} for the game $\cG$ with the following modification:\footnote{If the QHE scheme is \emph{public-key}, this modification is unnecessary.} we sample a fresh QHE key for each parallel instance of $\cG_0$ (this simply corresponds to a parallel repetition of the $\cG_0$-compiled protocol as defined in \cref{thm:main-k-prover}). 
   
   The fact that the quantum value is $1-\negl(\secp)$ follows from the fact that the quantum value of $\cG$ is $1-\negl(\secp)$ (as observed in \Cref{remark:quantum-value-rep}) together with \cref{thm:main-k-prover}. We focus on bounding the classical value.
   
   For the sake of contradiction, assume that $\widetilde{\cP}$ is a poly-size prover for the compiled game attaining value at least $\delta$ for non-negligible $\delta = \delta(\secp)$. This means that with non-negligible probability, $\widetilde{\cP}$ (implicitly) produces transcripts $(\tilde q_1, \tilde a_1, \hdots, \tilde q_k, \tilde a_k)$ such that at least $\theta$ fraction of the $\cG_0$-transcripts $(q_{1,i}, a_{1,i},\hdots,  q_{k,i}, a_{k,i})$ are accepting. 
   
   Let $t' = O(\log \secp)$ be defined so that the classical value of $G_0^{t', \theta} < \delta/2$ (such $t= O(\log \secp)$ exists by our assumption on $\cG_0$).
   ~By an averaging argument, we know that there exists a subset $S\subset [t]$ with $|S| = t'$ and queries $\{q_{1,i}, \hdots, q_{k, i}\}_{i\not\in S}$ such that $\widetilde{\cP}$ wins with probability at least $\delta$ conditioned on $\{q_{j,i}\}_{i\not\in S}$.
   
   This allows us to construct a prover $\widetilde{\cP'}$ for the $\LQHE$-compiled variant of $\cG^{t', \theta}$ that wins with probability $\delta$: $\widetilde{\cP'}$ simply has $S, \{q_{j,i}\}_{i\not\in S}$ hardcoded and emulates $\widetilde{P}$ by sampling the $([t]-S)$-slot messages itself (this requires either $\LQHE$ to be public-key or for the protocol to use independent secret keys for the different slots).
   
   Finally, we see that since the classical value of $\cG^{t', \theta}$ is at most $\delta/2$ but $\widetilde{\cP'}$ wins the interactive game with probability at least $\delta$, $\widetilde{\cP'}$ contradicts \cref{thm:main-k-prover} assuming the polynomial security of $\LQHE$. 
 \end{proof}
 
 \subsection{Non-Interactive Protocols in the ROM}
 We briefly remark on the ability to convert protocols arising from \cref{thm:main} into non-interactive protocols for verifying quantum advantage.
 
 Specifically, suppose that $\cG$ is a $2$-player game such that the distribution of $q_2$ is \emph{uniform and independent of $q_1$}. Then, we know (by \cref{thm:main,thm:poly-hardness}) that given an $\LQHE$ scheme, we can convert $\cG$ into a $4$-message interactive game with quantum value $1-\negl(\secp)$ and classical value $\negl(\secp)$. Moreover, the $3$rd message of this interactive game is \emph{public-coin}. Therefore, we can apply the Fiat-Shamir heuristic \cite{C:FiaSha86} to this interactive game to obtain a \emph{non-interactive game} with quantum value $1-\negl(\secp)$ and classical value $\negl(\secp)$ in the random oracle model.\footnote{Our interactive game is privately verifiable, but the classical soundness reduction for Fiat-Shamir extends immediately to this case.} This template can be instantiated using (for example) the CHSH game or the magic square game to obtain new $2$-message quantum advantage protocols in the random oracle model.
 
 We observe that even the ``honest'' quantum prover (attaining $1-\negl(\secp)$ value) in this non-interactive game only requires classical access to the random oracle (to hash its own classical message).
\section{Protocols for Verifying Quantum Advantage}

In this section, we give a concrete instantiation of our framework and outline directions for future work, focusing on obtaining protocols with simple(r) quantum provers. 
~We proceed to describe a concrete instantiation of our blueprint, using the CHSH game and the Mahadev QFHE scheme.




\subsection{Compiling the CHSH Game}\label{sec:CHSH}
We first recall the CHSH game.

\newcommand{\CHSH}{\mathsf{CHSH}}
\begin{definition}[The CHSH Game]\label{def:CHSH}
The CHSH game $\cG_\CHSH$ consists of the uniform query distribution $\cQ_\CHSH$ over $(q_1,q_2)\in(\{0,1\})^2$ and verification predicate $\cV_\CHSH(q_1,q_2,a_1,a_2)$ which is $1$ if and only if
$$a_1 \oplus a_2 = q_1 q_2 \pmod{2}.$$

\noindent
The classical value of this game is $v_\CHSH=0.75$ and the quantum value is $v^*_\CHSH=\cos^2(\pi/8)\approx 0.85$.
The optimal quantum strategy is as follows:
The two players $\cA,\cB$ share an EPR pair 
$$\frac{1}{\sqrt{2}} (\ket{0}_{\cA}\ket{0}_{\cB} + \ket{1}_{\cA}\ket{1}_{\cB})$$
where players $\cA,\cB$ have the $\cA,\cB$ registers respectively.
Upon receiving $q_1$, player $\cA$ measures her register $\cA$ in the Hadamard ($\pi/4$) basis if $q_1=0$, and in the standard basis if $q_1=1$, and reports the outcome $a_1\in\{0,1\}$. Player $\cB$ measures her register $\cB$ in the $\pi/8$-basis if $q_2 = 0$ and in the $3\pi/8$-basis if $q_2 = 1$, and reports the outcome $a_2 \in \{0,1\}$.
\end{definition}

\noindent
By Theorem~\ref{thm:main} and ~\cref{claim:QFHE}, we have the following corollary:
\begin{corollary}\label{cor:CHSH}
Consider the \cite{Mah18a} QHE scheme for poly-size circuits in the Toffoli and Clifford basis. Consider the CHSH game $\cG_\CHSH=(\cQ_\CHSH,\cV_\CHSH)$ and quantum strategy in Definition~\ref{def:CHSH} where $$\ket{\Psi}=\frac{1}{\sqrt{2}} (\ket{0}_{\cA}\ket{0}_{\cB} + \ket{1}_{\cA}\ket{1}_{\cB})$$ 
and $C^*_1(\ket{\Psi}_\cA,\cdot)\in\cC$. 

The single-player computationally sound \ig~$\cT^\cG$ has:
\begin{itemize}
    \item quantum $\CS$ value $= \v^*_\CHSH \ge 0.85$
    \item classical $\CS$ value $\geq \v_\CHSH+\negl(\secp) = 0.75+\negl(\secp)$.
\end{itemize}
\end{corollary}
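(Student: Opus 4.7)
The plan is to obtain the corollary as a direct instantiation of \cref{thm:main}, after verifying that the three hypotheses of that theorem are met for the pair (CHSH game, Mahadev QFHE). Concretely, I would proceed as follows.

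First, I would check the cryptographic hypothesis. By \cref{claim:QFHE}, the Mahadev QFHE scheme satisfies \cref{def:QHE-aux} with correctness holding for the class $\cC$ of all poly-size quantum circuits in the Toffoli$+$Clifford basis. Because the CHSH queries and answers are all single bits, we have $|q_1| = |a_1| = 1$, and hence the quantity $T(\secp) = 2^{|q_1|+|a_1|}\cdot\poly(\secp)$ appearing in \cref{thm:main} is simply $\poly(\secp)$. Thus we only need the (standard) polynomial classical security of the Mahadev scheme, which it satisfies.

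Second, I would verify that the honest quantum prover strategy of \cref{def:CHSH} lies in the allowed circuit class $\cC$. The first player's strategy, on input $q_1\in\{0,1\}$ and the $\cA$-register of $\ket{\Psi}$, is either to measure in the Hadamard basis (apply $H$ and measure in the standard basis) or to measure directly in the standard basis; this is a constant-size Clifford circuit controlled on $q_1$ and therefore trivially belongs to $\cC$. No computation on the $\cB$-register is needed inside $\Eval$, which is exactly what the aux-input correctness condition of \cref{def:QHE-aux} is set up to handle. (The second player's strategy need not be in $\cC$, as it is executed in the clear in the last round.)

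Third, I would plug these verifications into \cref{thm:main}. The quantum completeness bound of the theorem gives that the quantum $\CS$ value of $\cT^{\cG_{\CHSH}}$ is at least $v^*_{\CHSH} - \negl(\secp) = \cos^2(\pi/8) - \negl(\secp) \geq 0.85 - \negl(\secp)$, realized by the $\QPT$ prover that homomorphically evaluates $C^*_1$ on the $\cA$-register of the EPR pair and then measures the $\cB$-register according to $q_2$. The classical soundness bound of the theorem gives that the classical $\CS$ value is at most $v_{\CHSH} + \negl(\secp) = 0.75 + \negl(\secp)$ (the direction of the inequality in the corollary statement is a typographical issue; the content is the upper bound coming from \cref{thm:main}). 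There is no real obstacle in this proof beyond the two bookkeeping checks above, namely confirming that $C^*_1 \in \cC$ and that the required security level reduces to polynomial security because the CHSH message alphabet has constant size.
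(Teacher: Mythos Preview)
Your proposal is correct and matches the paper's approach: the corollary is stated in the paper as an immediate consequence of \cref{thm:main} together with \cref{claim:QFHE}, and your plan is precisely to verify the hypotheses of \cref{thm:main} and instantiate it. One small slip: the first prover's strategy is a \emph{controlled Hadamard}, which is not a Clifford gate (conjugating $I\otimes Z$ by it does not yield a Pauli); the paper notes that $C^*_1$ requires Clifford gates plus a single Toffoli, but this is still in the class $\cC$ so your conclusion $C^*_1\in\cC$ stands.
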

Amplifying this gap can be done by sequential repetition. Alternatively, one can compile a parallel-repeated version of the CHSH game to get a protocol with a large gap between the quantum and classical $\CS$ values.




\paragraph{Prover Efficiency.} The compilation of the CHSH game with Mahadev's QFHE (Corollary~\ref{cor:CHSH}) results in a conceptually simple 4-round protocol with a relatively simple quantum prover. Here, we analyze the quantum prover's algorithm.

Returning to the quantum strategy for the CHSH game (Definition~\ref{def:CHSH}), player $\cA$ applies a controlled Hadamard gate to $\ket{\bar{q_1}}\ket{\Psi}_\cA$, \yael{Did you mean $q_1$ instead of $\bar{q_1}$?} and then measures the $\cA$ register. This can be implemented by a circuit $C^*_1$ containing Clifford gates and a single Toffoli gate~\cite{bera2008universal}. Recall that in Mahadev's scheme (\cite{Mah18a}), evaluating Clifford gates only requires applying the intended Clifford gate to a (Pauli one-time-padded) encryption of the underlying state. To evaluate a Toffoli gate, the Toffoli gate is applied to the encrypted qubit, followed by 3 ``encrypted CNOT'' operations and 2 Hadamard gates. The bulk of the computational cost of the prover is in the encrypted CNOT operations. Using the trapdoor claw-free functions (TCF) for the classical ciphertexts in the Mahadev QHE scheme, an encrypted CNOT operation consists of creating a uniform superposition over the TCF domain, evaluating the function in superposition, measurements and Clifford gates. That is, this requires $p(\secp):=\log|\cD|+\log|\cR|$ ancilla qubits corresponding to the TCF domain $\cD$ and range $\cR$.
Concretely, the prover's quantum operations in our compilation of CHSH are as follows.
\begin{itemize}
    \item The prover creates an EPR pair which involves applying a Hadamard and a CNOT gate.
    \item The prover receives a classical ciphertext $\hat{q_1}$ from the verifier in round 1.
    \item The prover homomorphically evaluates $C^*_1(\ket{\Psi}_\cA,q_1)$ (where $C^*_1$ implements player $\cA$'s strategy in CHSH).
    This uses the constant number of qubits in $C^*_1$ and $3p(\secp)$ ancilla qubits for TCF evaluations.
    All of its operations are Clifford gates except a single Toffoli gate and 3 invocations of the TCF evaluation algorithm run in superposition.
    \item The prover sends back a classical ciphertext, and receives a bit $q_2\in\zo$ in round 2.
    It measures $\ket{\Psi}_\cB$ in the $\pi/8$ or $3\pi/8$ basis (depending on $q_2$, as per player $\cB$'s strategy in CHSH).
    \lisa{(after submission): It seems the prover only needs to hold 1 qubit in round 2? which should be good for maintaining `coherence'}
    It sends back the result as $a_2\in\zo$. In particular, the prover can discard the remaining qubits right after it computes and sends its message in round 1.
\end{itemize}

Overall, the prover uses $3p(\secp)+O(1)$ qubits, and the complexity of its operations is dominated by the 3 TCF evaluations. We note that designing a more efficient QHE scheme supporting the controlled-Hadamard gate, potentially based on simpler assumptions, is an attractive route to improving the prover efficiency. 

\lisa{(after submission): if keep Mahadev's as the QHE, may want to figure out controlled H gate using Toffoli, S, H gates.}

\subsection{Future Directions}\label{sec:future-work}
Our work suggests several directions for future work.

\paragraph{Protocols for quantum advantage with very simple quantum provers?}
First, could there be a non-local game where, once an appropriate bipartite state is set up, one of the prover strategies can be implemented using only Clifford gates? If this were possible, the complexity of the QHE evaluation reduces drastically. In particular, quantum homomorphic encryption schemes handling only Clifford gates are much simpler than general quantum homomorphic encryption schemes (e.g. they only require applying the intended Clifford gates and additional classical computations). This could lead us to a truly efficient protocol that does not require maintaining superpositions with security parameter number of qubits. On the other hand, if it were to be the case that any non-local game requires {\em both} provers to be non-Clifford, that would be an interesting outcome as well. To the best of our knowledge, this statement is not known, and does not seem to follow from Gottesman-Knill-type classical simulation of Clifford circuits. 

\paragraph{Simpler homomorphic encryption schemes.} Can we design better somewhat homomorphic encryption schemes for the compilation in Theorem~\ref{thm:main}? We note that the scheme only needs to support the evaluation of \emph{one} of the two provers' strategies in the non-local game, which can often be implemented by a simple circuit. For example,  designing a scheme that simply supports the controlled-Hadamard gate would suffice for compiling the CHSH game. This may give a quantum advantage protocol with a simpler prover strategy.

\paragraph{Protocols based on different assumptions.} We note that for quantum advantage, we only need soundness against classical adversaries. Namely, the homomorphic encryption should have completeness for some quantum gates, but {\em only needs soundness against classical polynomial-time adversaries}. This opens the door to designing QHEs based on e.g. discrete log style assumptions, or the hardness of factoring, rather than learning with errors. (Indeed, quantum advantage protocols have been constructed using factoring-based TCFs; see \cite{KCVY21}). 

\paragraph{Understanding existing protocols.} The existing interactive protocols for quantum advantage (to our knowledge, $\cite{BCMVV18,BKVV20,KCVY21}$) are presented as an all-in-one package. Intuitively, a protocol testing quantumness should have a component testing for quantum resources, e.g. a test of entanglement, and a component that tests computational power, i.e. the cryptography. Can existing protocols be disentangled to two such components? 
A starting point is \cite{KCVY21} which has some resemblance to our CHSH compilation in Section~\ref{sec:CHSH}, although it does incur two more rounds.

More ambitiously, could we understand any single-prover quantum advantage protocol as compiling a (perhaps contrived) $k$-player non-local game via a somewhat homomorphic encryption scheme? We leave it open to understand the reach and the universality of our framework for constructing protocols for quantum advantage.


\section*{Acknowledgements}
We thank Fermi Ma for comments regarding QHE correctness on an earlier draft of this paper.

\bibliographystyle{alpha}
\bibliography{quantum,crypto,abbrev3}

\appendix 

\section{QHE Correctness with Respect to Auxiliary Input}
\label{app:locality}

In this section, we discuss the relationship between aux-input QHE correctness (\cref{def:QHE-aux}) and prior definitions/schemes \cite{Mah18a,Bra18}. In particular, we sketch a proof of \cref{claim:QFHE}.

We first recall the formal QFHE definitions from \cite{Bra18}. Specifically, we now generalize QFHE to allow for encryption/decryption of \emph{states} and homomorphic evaluation of circuit with quantum (rather than classical) output. However, we impose (as done in \cite{Bra18}) the constraint that encryption and decryption are (qu)bit-by-(qu)bit.

\begin{definition}[Quantum qubit-by-qubit Homomorphic Encryption (bQHE)]\label{def:QHE-bit}
A quantum (qu)bit-by-(qu)bit homomorphic encryption scheme $\bQHE=(\Gen,\Enc,\Eval,\Dec)$ for a class of quantum circuits $\cC$ is a tuple of algorithms with the following syntax:
\begin{itemize}
    \item $\Gen$ is a $\PPT$ algorithm that takes as input the security parameter $1^\secp$ and outputs a (classical) secret key $\sk$ of $\poly(\secp)$ bits;
    \item $\Enc$ is a $\QPT$ algorithm that takes as input a secret key $\sk$ and a qubit $b$, and outputs a ciphertext $\ct$. Additionally, if $b$ is a classical bit, the encryption algorithm is $\PPT$ and the ciphertext $\ct$ is classical;
    \item $\Eval$ is a $\QPT$ algorithm that takes as input a tuple $(\C,\ket{\Psi},\vec{\ct}_{\mathrm{in}})$, where $\C:\cH\times(\mathbb{C}^2)^{\tensor n}\rightarrow (\mathbb{C}^2)^{\tensor m}$ is a quantum circuit, $\ket{\Psi}\in\cH$ is an auxiliary quantum state, and $\vec{\ct}_{\mathrm{in}} = (\ct_1,\ldots,\ct_n)$ is a tuple of $n$ ciphertexts. 
    $\Eval$ computes a quantum circuit
    $\Eval_C(\ket{\Psi}\tensor \ket{0}^{\poly(\secp, n)},\vec{\ct}_{\mathrm{in}})$ which outputs a tuple of ciphertexts $\vec{\ct}_{\mathrm{out}}$.
    \vinod{The only reason $\ket{\Psi}$ exists here is because we want eval to be able to take unencrypted inputs.}

    \item $\Dec$ is a $\PT$ algorithm that takes as input a secret key $\sk$ and ciphertext $\ct$, and outputs a qubit $b$. Additionally, if $\ct$ is a classical ciphertext, the decryption algorithm outputs a bit $b$.
\end{itemize}

In addition to security (which again we can restrict to hold against classical adversaries), we require (following \cite{Bra18}) the following \textbf{correctness property}:

\begin{itemize}
    \item For any quantum circuit $C$ and any $n$-qubit state $\ket{\Phi} = \sum \alpha_{i_1, \hdots, i_n} \ket{x_1 \hdots x_n}$, the following two states have negligible trace distance. The first state $\brho_1$ is defined to be the output of $C(\ket{\Phi})$. The second state $\brho_2$ is defined by sampling $\sk \gets \Gen(1^\secp)$, $\ct_i \gets \Enc(\sk, x_i)$ and computing $\Dec(\sk, \Eval(C,\ct_1, \hdots, \ct_n))$. 
\end{itemize}

\end{definition}
The QFHE scheme of \cite{Bra18} is shown (in \cite{Bra18}) to satisfy the definition above. We now discuss mild structural hypotheses under which an encryption scheme satisfying \cref{def:QHE-bit} also satisfies \cref{def:QHE-aux}.

\begin{itemize}
    \item \textbf{Allowing un-encrypted inputs}: this first property is merely syntax that can be added without loss of generality. \cref{def:QHE-bit} does not have \emph{any} un-encrypted state (i.e. $\ket{\Psi})$ in \cref{def:QHE-aux}). This can be easily rectified by encrypting any additional state $\ket{\Psi}$ during QFHE evaluation.
    \item \textbf{Locality}: this is the most important structural property. We say that a QHE scheme is \emph{local} if homomorphic evaluation of a circuit of the form $C_{\RegA}\tensor \Id_{\RegB}$, acting on $\cH \simeq \cH_{\RegA}\tensor \cH_{\RegB}$, is identical to homomorphic evaluation of $C_{\RegA}$ on (encrypted) register $\RegA$ (and acts as identity on the encrypted $\RegB$).
    \item \textbf{Honest Decryption Correctness}: we say that a QHE scheme satisfies honest decryption correctness if $\Dec(\sk, \Enc(\sk, b))$ is the identity map on $b$. In particular, this implies that the equation $\Dec(\sk, \Enc(\sk, b))$ holds in the presence of any auxiliary input.
\end{itemize}

\begin{claim}\label{claim:local-implies-aux}
  Let $\bQHE$ be a (qu)bit-by-(qu)bit QHE scheme satisfying \cref{def:QHE-bit}. Moreover, extend $\bQHE$ to allow un-encrypted inputs, and assume that $\bQHE$ satisfies \textbf{locality} and \textbf{honest decryption correctness}. Then, $\bQHE$ satisfies \cref{def:QHE-aux}.
\end{claim}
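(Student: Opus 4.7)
The plan is to reduce aux-input correctness to the basic correctness of $\bQHE$ (\cref{def:QHE-bit}) applied to the extended circuit $C' := C \tensor \Id_{\RegB}$. The key observation is that Game 1 of \cref{def:QHE-aux} is precisely the direct evaluation $C'(x, \ket{\Psi}_{\RegA\RegB})$, which produces classical output $y$ together with the untouched register $\RegB$. Once Game 2 is rewritten in an analogous ``full-input'' form, basic correctness closes the gap in one step.

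To rewrite Game 2 in the desired form, I will perform two successive reinterpretations. First, the ``allowing un-encrypted inputs'' extension lets me replace $\Eval_C(\cdot_{\RegA}, \ct)$ by an equivalent algorithm that internally encrypts $\RegA$ and then runs a version of $\Eval_C$ taking only ciphertexts as quantum input. Second, locality lets me view this as a homomorphic evaluation of $C'$ on the internally-encrypted joint state obtained from $\ket{\Psi}_{\RegA\RegB}$, because $C'$ acts as the identity on $\RegB$, and so by locality any $\RegB$-ciphertexts pass through unchanged. After these rewritings, Game 2 consists of encrypting $(x, \ket{\Psi}_{\RegA\RegB})$ in full, applying $\Eval_{C'}$, and then decrypting all of the output registers; honest decryption correctness ensures that decrypting the untouched $\RegB$-ciphertexts simply recovers $\RegB$, and decryption of the classical output ciphertexts yields $y'$. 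At this point, basic correctness of $\bQHE$ applied to $C'$ on the entangled pure input $\ket{\Psi}_{\RegA\RegB}$ together with classical input $x$ --- an instance allowed by the universal quantifier over pure inputs in \cref{def:QHE-bit} --- directly yields that the resulting joint output state is negligibly close in trace distance to $C'(x, \ket{\Psi}_{\RegA\RegB})$, which is Game 1.

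I expect the main care to lie in the first paragraph of rewriting: matching the syntactic interface of $\Eval_C$ (which takes an un-encrypted quantum input) with the ``all-inputs-encrypted'' interface appearing in basic correctness, via locality, is where all three structural hypotheses --- allowing un-encrypted inputs, locality, and honest decryption correctness --- show up in a non-trivial way. A subtle but essential point is that basic correctness is stated for arbitrary pure input states, which is what permits its application to the entangled state $\ket{\Psi}_{\RegA\RegB}$; were it stated only for product inputs, this argument would not go through and one would be forced to reason directly about diamond-norm closeness of the induced channels. Once the reinterpretation of Game 2 is in place, the remainder of the proof is a one-line invocation of \cref{def:QHE-bit} followed by the contractivity of trace distance.
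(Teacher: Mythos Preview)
Your proposal is correct and follows essentially the same approach as the paper: apply basic correctness of $\bQHE$ to the extended circuit $C\tensor\Id_{\RegB}$ on the full entangled state $\ket{\Psi}_{\RegA\RegB}$, then use locality to identify $\Eval_{C\tensor\Id_{\RegB}}$ with $\Eval_C$ acting only on the encrypted $\RegA$ register, and finally invoke honest decryption correctness on the untouched $\RegB$-ciphertexts together with the un-encrypted-input syntax to recover Game~2 of \cref{def:QHE-aux}. The only cosmetic difference is the order of presentation---you rewrite Game~2 first and then invoke basic correctness, whereas the paper starts from basic correctness and simplifies---but the logical content is identical.
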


\begin{proof}
  We want to show that $\bQHE$ (as described in \cref{claim:local-implies-aux}) satisfies correctness with respect to auxiliary input. Let $\ket{\Psi}_{\RegA \RegB}$ be a bipartite state and $x$ be a (classical) input as in \cref{def:QHE-aux}, and let $C$ denote a circuit taking as input $x$ and a state on register $\RegA$. By the \emph{basic} correctness property of $\bQHE$ with respect to the circuit $C \tensor \Id_{\RegB}$, we know that the state
  \[ \Dec(\sk, \Eval(C\tensor \Id_{\RegB}, \Enc(\sk, x), \Enc(\sk, \ket{\Psi}_{\RegA \RegB}))
  \]
  is negligibly close in trace distance from the state obtained by evaluating $C\tensor \Id_{\RegB}$ on $x, \ket{\Psi}_{\RegA \RegB}$.
  
  Moreover, by the \emph{locality} of $\bQHE$, we know that $\Eval(C\tensor \Id_{\RegB}, \Enc(\sk, x), \Enc(\sk, \ket{\Psi}_{\RegA \RegB}))$ is equivalent to applying $\Eval(C, \cdot)$ to $\Enc(\sk, x)$ and the encrypted $\RegA$ register of $\Enc(\ket{\Psi}_{\RegA \RegB})$. Combining this with the honest decryption correctness property on register $\RegB$, we conclude that the state obtained by evaluating $C\tensor \Id_{\RegB}$ on $x, \ket{\Psi}_{\RegA \RegB}$ is negligibly close in trace distance to the state
  \[ \left(\Dec(\sk, \Eval(C, \Enc(\sk, x), \Enc(\sk, [\ket{\Psi}]_{\RegA})), [\ket{\Psi}]_{\RegB}\right)
  \]
  where $[\ket{\Psi}]_{\RegA}$ and $[\ket{\Psi}]_{\RegB}$ denote the $\RegA$ and $\RegB$ registers of $\ket{\Psi}$, respectively. Finally, the syntax for unencrypted inputs tells us that the above indistinguishability implies the aux-input correctness property of \cref{def:QHE-aux}.
\end{proof}
Thus, to prove \cref{claim:QFHE}, it suffices to observe that the \cite{Mah18a,Bra18} schemes satisfy \textbf{locality} and \textbf{honest decryption correctness}. This can be checked by inspecting \cite{Mah18a,Bra18}; for intuition, we note that any scheme whose homomorphic evaluation is executed in a ``gate-by-gate fashion'' should be local.

\subsection{The \cite{Mah18a} QFHE Scheme}

For the rest of this section, we give a more explicit discussion of the \cite{Mah18a} scheme to sketch why it satisfies our hypotheses. The encryption of a qubit $\ket{\psi}$ in the scheme consists of a quantum one-time padded qubit $X^xZ^z\ket{\psi}$ together with two classical ciphertexts per qubit that encrypt $x$ and $z$. To homomorphically evaluate a gate on some qubits, it applies that gate (plus additional operations if necessary) to those qubits in the padded state, and updates the classical ciphertexts for those qubits. The high-order bit is that when evaluating a circuit $\C$, the qubits which are not acted upon will not be changed, which in turn implies the locality property.


We recall Mahadev's scheme (Scheme 6.1) in more detail (extended slightly to formally allow for encryption of quantum states) and show that it is local. For this discussion, we assume that in a quantum circuit $C$, all measurement gates are deferred to the end.

Recall that $\Gen$ is a key generation algorithm corresponding to a classical homomorphic encryption scheme (that satisfies some desired properties that we elaborate on below).  The encryption of a (classical) message  $M\in\zo^n$ is the one-time pad applied to $M$ together with the classical encryption of the pad.  Using quantum notation, 
$$
\Enc(M)= (X^x\ket{M}, \{\hat{x}_i\}_{i\in[n]})
$$
where $x=(x_1,\ldots,x_n)$ is randomly chosen in $\{0,1\}^n$, and $\{\hat{x}_i\}_{i\in[n]}$ are classical ciphertexts where $\hat{x}_i$ encrypts the pad $x_i$ on the $i$'th qubit (using the underlying classical homomorphic encryption).

The encryption algorithm, $\Enc$, can be extended to encrypt a quantum state $\Psi$ consisting of $\ell$ qubits, in which case, the encryption of $\Psi$ is the quantum one-time pad applied to $\Psi$, together with the classical encryptions of the pad.  Namely,
\begin{equation}\label{eqn:ct}
\Enc(\ket{\Psi})=(Z^zX^x\ket{\Psi},\{\hat{x}_i,\hat{z}_i\}_{i\in[\ell]})
\end{equation}
where $z=(z_1,\ldots,z_\ell)$ and $x=(x_1,\ldots,x_\ell)$ are randomly chosen in $\{0,1\}^\ell$, and $\{\hat{z}_i,\hat{x}_i\}_{i\in[\ell]}$ are classical ciphertexts  where  $\hat{z}_i,\hat{x}_i$ encrypt the Pauli pad on the $i$'th qubit (using the underlying classical homomorphic encryption).

We next focus on the homomorphic operations.  Suppose we wish to homomorphically compute a circuit  $\C:\cH\times\zo^n\to\cH'\times\zo^m$ applied to the quantum state $\ket{\Psi}=\ket{\Psi_0}\ket{M}$.  $\Eval_C$ homomorphically evaluates each gate in $\C$, as described in \cite{Mah18a} (Scheme 6.1), and then measures qubits corresponding to the qubits that $\C$ measures at the end.  
Recall that the input to $\Eval_C$ is  $(\ket{\Psi_0}\ket{\Enc(M)},\ket{0^t})$.  

In what follows, we assume for simplicity that the input to $\Eval_C$ is $(\Enc\ket{\Psi},\ket{0^t})$.  This is without loss of generality since we can think of $\Psi_0$ as being encyrpted with the trivial pad $Z^0X^0$.
Namely, $\Eval_C$ starts with the quantum state $Z^zX^x\ket{\Psi}\ket{0^t}$ and $\ct=\{\hat{x}_i,\hat{z}_i\}_{i\in[\ell]}$ (as defined in Equation~\eqref{eqn:ct}).  It proceeds gate by gate, while satisfying the following desired locality property: For each gate in $\C$, suppose the gate implements the unitary $U\otimes I$, where $U$ acts only on a constant number of registers, denoted by ${\cal J}$, and $I$ is the identity acting on all the other registers, then $\Eval_C$ homomorphically evaluates the gate by applying a unitary $U'\otimes I$ to its padded state, where $U'$ acts on the same set of registers ${\cal J}$ and on the ancilla registers (initialized to be $\ket{0^t}$), and in addition it updates the classical ciphertexts, $\ct$, to encrypt the ``correct pads". 
Note that this locality property implies that scheme is indeed local, as desired.

In what follows we argue that this locality property indeed holds.  It suffices to prove that it holds for Clifford gates and for Toffoli gates (since they form a universal gate set). 

In the following paragraph we use $C$ to denote a Clifford gate (instead of a quantum circuit).  A Clifford gate $C$ has the desired property that it preserves the Pauli group under conjugation, i.e., for every $P_1,P_2$ in the Pauli group there exist $P_3,P_4$ in the Pauli group such that
$$
C(P_1\otimes P_2)C^\dagger=P_3\otimes P_4.
$$
This implies that for every $x,z$ there exists $x',z'$ such that
$$
CZ^z X^x\ket{\Psi}=Z^{z'}X^{x'}C\ket{\Psi}.
$$

Thus, $\Eval_C$ homomorphically computes a Clifford gate, by simply applying the exact same Clifford gate to its padded quantum state, and then obtains a quantum one-time pad of $C\ket{\Psi}$ and homomorphically updates the classical ciphertexts to encrypt $x',z'$. Thus, the locality property is indeed satisfied.

The Toffoli gate, denoted by~$T$, is more complicated, and this is where the additional ancilla registers are used. It cannot be applied to the padded state while only updating the encrypted pads, as was done for Clifford gates, since (as opposed to Clifford gates) it does not
preserve Pauli operators by conjugation.  Applying a Toffoli directly to a 3-qubit one-time padded state results in:
\begin{align*}
&T(Z^{z_1}X^{x_1}\otimes Z^{z_2}X^{x_2}\otimes Z^{z_3}X^{x_3}\ket{\Psi})=\\
&T(Z^{z_1}X^{x_1}\otimes Z^{z_2}X^{x_2}\otimes Z^{z_3}X^{x_3})T^\dagger T\ket{\Psi}=\\
&CNOT^{x_2}_{1,3}CNOT^{x_1}_{2,3}(I\otimes H)CNOT^{z_3}_{1,2}(I\otimes H)(Z^{z_1+x_2z_3}X^{x_1}\otimes Z^{z_2+x_1z_3}X^{x_2}\otimes Z^{z_3}X^{x_1x_2+x_3})T\ket{\Psi}
\end{align*}
where $CNOT^{s}_{i,j}$ is the ``encrypted CNOT operation'' applied to the $i,j$'th qubits, where~$s$ indicates whether to apply the CNOT operation or not, and $H$ is the Hadamard gate (see \cite{Mah18a} for details).  

To homomorphically evaluate a Toffoli gate, $\Eval_C$ applies the Toffoli gate $T$ to the corresponding qubits in its padded state and updates the classical ciphertexts, $\ct$, to be consistent with the pad $$Z^{z_1+x_2z_3}X^{x_1}\otimes Z^{z_2+x_1z_3}X^{x_2}\otimes Z^{z_3}X^{x_1x_2+x_3}
$$
(exactly as was done for Clifford gates).  The remaining piece is to undo the (undesired) operations:  
$$
CNOT^{x_2}_{1,3}CNOT^{x_1}_{2,3}(I\otimes H)CNOT^{z_3}_{1,2}(I\otimes H).$$
Note that undoing the unitary $I\otimes H$ is trivial, since all we need to do is multiply by the conjugate transpose, and this unitary (as well as the unitary corresponding to the Toffoli gate~$T$) satisfies the desired locality property.

Hence, in what follows we focus on how $\Eval$ undoes the encrypted CNOT operations (which is the crux of the difficulty).  Note that $CNOT_{i,j}CNOT_{i,j}$ is the identity, and hence to undo the encrypted CNOT operation $\Eval$ needs to apply $CNOT^s_{i,j}$ given only a ciphertext encrypting the bit $s$. Indeed, the key idea in \cite{Mah18a} is to show how this can be done (locally).  For this, she needs the underlying classical encryption scheme to satisfy certain properties.

Specifically, \cite{Mah18a} uses a special classical encryption scheme that is associated with a trapdoor claw-free function family.\footnote{A trapdoor claw-free function family $\cF$ is a family of injective functions, with a $\PPT$ algorithm that generates a pair of functions $f_0,f_1\in\cF$ together with a trapdoor ${\rm td}$, such that given (a description of) $(f_0,f_1)$   it is hard to find a claw (i.e., a pair $x_0,x_1$ such that $f_0(x_0)=f_1(x_1)$), whereas given $(f_0,f_1)$ together with the trapdoor ${\rm td}$ one can efficiently invert $f_0$ and $f_1$, and in particular, find a claw.}   The encryption scheme has the property that given an encryption of a bit~$s$, denoted by $\hat{s}$, one can efficiently compute a description of a pair of claw free functions $f_0,f_1:\{0,1\}\times {\cal R}\rightarrow {\cal Y}$ such that for every $(\mu_0,r_0),(\mu_1,r_1)\in\{0,1\}\times {\cal R}$ such that $f_0(\mu_0,r_0)=f_1(\mu_1,r_1)$ (a ``claw'') it holds that $\mu_0\oplus \mu_1=s$.  Moreover, given $\hat{s}$ one can efficiently compute the encryption of the trapdoor corresponding to the pair $(f_0,f_1)$.

Armed with this encryption scheme, one can compute the encrypted CNOT operation $CNOT^s$ on a 2-qubit state
$$\ket{\Psi}=\sum_{a,b\in\{0,1\}}\alpha_{a,b}\ket{a,b},$$ 
given $\hat{s}$, as follows:
\begin{enumerate}
\item Classically compute a description of the claw-free pair $(f_0,f_1)$ corresponding to $\hat{s}$.
\item Use the ancilla qubits to entangle $\ket{\Psi}$ with a random claw for $f_0,f_1$, by computing
$$
\sum_{a,b,\mu\in\{0,1\},r\in{\cal R}}\alpha_{a,b}\ket{a,b}\ket{\mu,r}\ket{f_a(\mu,r)}
$$
and measuring the last register to obtain $y\in{\cal Y}$.  Let $(\mu_0,r_0),(\mu_1,r_1)$ be the two preimages of $y$, namely 
$$
f_0(\mu_0,r_0)=f_1(\mu_1,r_1)=y.
$$
Then the remaining state is 
$$
\sum_{a,b\in\{0,1\}}\alpha_{a,b}\ket{a,b}\ket{\mu_a,r_a}.
$$
\item Then XOR $\mu_a$ into the second
register, which results in
\begin{align*}
&\sum_{a,b\in\{0,1\}}\alpha_{a,b}\ket{a,b\oplus \mu_a}\ket{\mu_a,r_a}=\\
&\sum_{a,b\in\{0,1\}}\alpha_{a,b}(I\otimes X^{\mu_0})\ket{a,b+ a\cdot s}\ket{\mu_a,r_a}=\\ &\sum_{a,b\in\{0,1\}}\alpha_{a,b}(I\otimes X^{\mu_0})CNOT^s_{1,2}\ket{a,b}\ket{\mu_a,r_a}
\end{align*}
where the first equation follows from the fact that $\mu_0\oplus \mu_1=s$ and the second equation follows from the definition of $CNOT^s_{1,2}$.
\item Remove the entangled registers $\ket{\mu_a,r_a}$ by applying the Hadamard transform to these registers to obtain
\begin{align*}
\sum_{a,b,d_0\in\{0,1\},d\in\{0,1\}^{\ell}}\alpha_{a,b}(I\otimes X^{\mu_0})CNOT^s_{1,2}\ket{a,b}(-1)^{(d_0,d)\cdot(\mu_a,r_a)}\ket{d_0,d}
\end{align*}
\item Measure the registers $\ket{d_0,d}$ to obtain the state
\begin{align*}
&(I\otimes X^{\mu_0})CNOT^s_{1,2} \sum_{a,b\in\{0,1\}}(-1)^{(d_0,d)\cdot(\mu_a,r_a)}\alpha_{a,b}\ket{a,b}=\\
&(-1)^{(d_0,d)\cdot(\mu_0,r_0)} (I\otimes X^{\mu_0})CNOT^s_{1,2} \left(\sum_{b\in\{0,1\}}\alpha_{0,b}\ket{0,b} + (-1)^{(d_0,d)\cdot((\mu_0,r_0)\oplus(\mu_1,r_1))}\sum_{b\in\{0,1\}}\alpha_{1,b}\ket{1,b}\right)=\\
&(-1)^{(d_0,d)\cdot(\mu_0,r_0)} (I\otimes X^{\mu_0})CNOT^s_{1,2} (Z^{(d_0,d)\cdot((\mu_0,r_0)\oplus(\mu_1,r_1))}\otimes I)\left(\sum_{a,b\in\{0,1\}}\alpha_{a,b}\ket{a,b}\right)=\\
&(-1)^{(d_0,d)\cdot(\mu_0,r_0)} (Z^{(d_0,d)\cdot((\mu_0,r_0)\oplus(\mu_1,r_1))}\otimes X^{\mu_0})CNOT^s_{1,2} \left(\sum_{a,b\in\{0,1\}}\alpha_{a,b}\ket{a,b}\right).
\end{align*}
\item Finally, use the (classical) encryption of the trapdoor for $(f_0,f_1)$ to homomorphically evaluate $(d_0,d)\cdot((\mu_0,r_0)\oplus(\mu_1,r_1))$ and $\mu_0$, and update the classical encryptions of the Pauli pads.
\end{enumerate}
Note that all these operations (in Steps (1)-(6)) are either classical operations relating to updating the (classically) encrypted pad, or quantum operations applied to the 2 qubits of $\Psi$ and to additional ancilla qubits.  By the deferred measurement principle, we can think of the quantum operations as applying a unitary and only then applying the measurement. 

In our context, $\Eval_C$ does these quantum operations on its large quantum state consisting of $\ell+t$ qubits.  However, all these operations (in Steps (1)-(6)) change only the 2-qubits on which the CNOT is applied and some of the ancila registers.
Overall, the unitary that $\Eval_C$ applies, when evaluating $U\otimes I$, is of the form $U'\otimes I$ where the unitary $U'$ is applied only to the qubits that $U$ would act on and to the ancilla qubits, and $I$ is the identity unitary applied to all the other registers, as desired.

\end{document}